\definecolor{cadmiumgreen}{rgb}{0.0, 0.42, 0.24}
\definecolor{dark-blue}{rgb}{0.05,0.25,1}
\newcommand\claimref[2]{\hyperref[{#1}]{#2}}
\newcommand{\nc}[1]{\newcommand{#1}}
\newcommand{\minor}{\preceq}
\nc{\e}{\epsilon}
\nc{\dd}{\delta}
\nc{\dom}{dom}
\nc{\nn}{\nabla}
\nc{\bn}{\nabla\!\!\!\nabla_1}
\pgfplotsset{compat = newest}
 	\newtheorem{cor}[lemma]{Corollary}
\newtheorem{alemma}[lemma]{Adapted Lemma}
\newtheorem{acor}[lemma]{Adapted Corollary}
\begin{document}
\title{Local planar domination revisited}
%
%
\author{Ozan Heydt\inst{1} \and
Sebastian Siebertz\inst{1} \and
Alexandre Vigny\inst{1}}
\authorrunning{O.\ Heydt, S.\ Siebertz and A.\ Vigny}
%
\institute{University of Bremen, Germany\\
\email{\{heydt,siebertz,vigny\}@uni-bremen.de}}
\maketitle              
%


\begin{abstract}
  We show how to compute a 20\hspace{1pt}-\hspace{1pt}approximation of
  a minimum dominating set in a planar graph in a constant number of
  rounds in the LOCAL model of distributed computing. This improves on
  the previously best known approximation factor of 52, which was
  achieved by an elegant and simple algorithm of Lenzen et al.\ Our
  algorithm combines ideas from the algorithm of Lenzen et al.\ with
  recent work of Czygrinow et al.\ and Kublenz et al.\ to reduce to
  the case of bounded degree graphs, where we can simulate a
  distributed version of the classical greedy algorithm.
  \keywords{Dominating set \and LOCAL algorithms \and Planar graphs}
\end{abstract}


\section{Introduction}\label{sec:intro}

A dominating set in an undirected and simple graph $G$ is a set
$D\subseteq V(G)$ such that every vertex $v\in V(G)$ either belongs
to $D$ or has a neighbor in $D$. The dominating set problem is a
classical NP-complete problem~\cite{karp1972reducibility} with many
applications in theory and practice, see e.g.~\cite{du2012connected,sasireka2014applications}. In this paper we study
the distributed time complexity of finding
dominating sets in planar graphs in the classical LOCAL model of distributed computing.
In this model, a distributed system is modeled
by an undirected (planar) graph~$G$. Every vertex represents a
computational entity and the vertices communicate through the edges of
$G$. The vertices are equipped with unique identifiers and initially,
every vertex is only aware of its own identity. A computation then
proceeds in synchronous rounds.  In every round, every vertex sends
messages to its neighbors, receives messages from its neighbors and
performs an arbitrary computation.  The complexity of a LOCAL
algorithm is the number of rounds until all vertices return their
answer, in our case, whether they belong to a dominating set
or~not.

The problem of approximating
dominating sets in the LOCAL model has received considerable
attention in the literature. Since in general graphs
it is not possible to compute a
constant factor approximation in a constant number of rounds~\cite{KuhnMW16},
much effort has been invested to improve the ratio between approximation
factor and number of rounds on special graph classes. A very successful
line of structural analysis of graph properties that can lead to improved
algorithms was started by the influential paper of Lenzen et al.~\cite{lenzen2013distributed}, who in particular proved that on planar graphs
a 130-approximation
of a minimum dominating set can be computed in a constant number of
rounds. A careful analysis of Wawrzyniak~\cite{wawrzyniak2014strengthened}
later showed that the algorithm computes in fact a 52-approximation.
In terms of lower bounds, Hilke et al.~\cite{hilke2014brief} showed that there is no
deterministic local algorithm (constant-time distributed graph algorithm) that
finds a~$(7-\epsilon)$-approximation of a minimum dominating set on
planar graphs, for any positive constant~$\epsilon$. Better approximation
ratios are known for some special cases, e.g.\ 32 if the planar graph is
triangle-free \mbox{\cite[Theorem 2.1]{alipour2020distributed}}, 18 if the planar graph has girth
five~\cite{alipour2020local} and 5 if the graph is
outerplanar (and this bound is tight)~\cite[Theorem 1]{bonamy2021tight}.

In this work we tighten the gap between the best-known
lower bound of~$7$ and the best-known upper bound of $52$ on planar graphs
by providing a new approximation algorithm computing a 20\hspace{1pt}-\hspace{1pt}approximation.

Our algorithm proceeds in three phases.
The first phase is a preprocessing phase that was similarly employed in the
algorithm of Lenzen et al.~\cite{lenzen2013distributed}. In a key lemma,
Lenzen et al.\ proved that there are only few vertices whose open neighborhood
cannot be dominated by at most six vertices. We improve this lemma and show
that there are only slightly more vertices whose open neighborhood cannot be
dominated by \emph{three} other vertices. All these vertices are selected into an
initial partial dominating set and as a consequence the open neighborhoods of all
remaining vertices can be dominated be at most three vertices.

By defining the notion of \emph{pseudo-covers},
Czygrinow et  al.~\cite{czygrinow2018distributed} provided a tool to carry
out a fine grained analysis of the vertices that can potentially dominate
the remaining neighborhoods.
Using ideas of~\cite{kublenz2020distributed} and~\cite{siebertz2019greedy}
we provide an even finer analysis for planar graphs on which we
base the second phase of our distributed algorithm and compute a
second partial dominating set.

\pagebreak

We prove that
after the second phase we are left with a graph where every vertex
has at most 30 non-dominated neighbors. In particular, every vertex
from a minimum dominating set $D$ can dominate at most 30
non-dominated vertices, hence, we could at this point pick all
non-dominated vertices to add at most $31|D|$ vertices (each
vertex dominates its neighbors and itself). We could
also apply a general algorithm of Lenzen and Wattenhofer
that computes in a graph of arboricity $a$ and
maximum degree~$\Delta$ a \mbox{$16a\log \Delta$-approximation}
in $6\left\lceil \log \Delta+1\right\rceil$ rounds~\cite{lenzen2010minimum}.
Planar graphs have arboricity $3$ and $\log 30\approx 4.907$, hence,
in our situation $16a\log \Delta\approx 235$ and this would not yield an improvement.
Of course, Lenzen and Wattenhofer
optimized not only towards minimizing the approximation factor, which
they could have easily improved, but
also towards minimizing the number of rounds with respect to~$\Delta$. This is well-motivated as in general graphs the maximum
degree can be large, however, in our algorithm
we always arrive at this fixed constant
degree so we can now proceed in a constant number of rounds.

We proceed in a greedy manner in $30$ rounds as follows.
Call the number of non-dominated neighbors of a vertex $v$ the
\emph{residual degree} of $v$. In the first round, if a non-dominated
vertex has a neighbor of residual degree~$30$, it elects one such neighbor into
the dominating set (or if it has residual degree $30$ itself, it may choose itself).
The neighbors of the chosen elements are
marked as dominated and the residual degrees are updated. Note that
all non-dominated neighbors of a vertex of residual degree $30$
in this round choose a dominator, hence, the residual degrees
of all vertices of residual degree $30$ are decreased to $0$, hence, after
this round there are no vertices of residual degree $30$ left.
In the second round, if a non-dominated vertex has a neighbor
of residual degree~$29$, it elects
one such vertex into the dominating set, and so on, until after $30$ rounds
in the final round every vertex chooses a dominator. Unlike in the general
case, where
nodes cannot learn the current maximum residual degree in a constant
number of rounds, by establishing
an upper bound on the maximum residual degree and proceeding in exactly
this number of rounds, we ensure that we iteratively exactly choose the
vertices of maximum residual degree. It remains to analyze the performance
of this algorithm.

A simple density argument shows
that there cannot be too many vertices of degree $i\geq 6$ in a planar
graph. At a first glance it seems that the algorithm would perform worst
when in every of the $30$ rounds it would pick as many vertices as possible,
as the constructed dominating set would grow as much as possible. However,
this is not the case, as picking many high degree vertices at the same time makes
the largest progress towards dominating the whole graph. It turns
out that there is a delicate balance between the vertices that we pick
in round $i$ and the remaining non-dominated vertices that leads
to the worst case. We formulate these conditions as a
linear program and solve the linear program. In total, this
leads to the claimed 20\hspace{1pt}-\hspace{1pt}approximation (\cref{thm:planar}).

\smallskip
We then analyze our algorithm on more restricted graphs classes, and prove that
it computes approximations of factors: 14 for triangle-free planar graphs,
13 for bipartite planar graphs, 7 for planar graphs of girth $5$, and
12 for outerplanar graphs (\cref{thm:bip,thm:tri,thm:girth,thm:outer}).
This
improves the currently best known approximation ratios of 32
and 18 for triangle-free planar graphs and planar graphs of girth $5$,
respectively, while our algorithm fails short of achieving
the optimal approximation ratio of 5 on outerplanar graphs.


%


\section{Preliminaries}

In this paper we study the distributed time complexity of finding
dominating sets in planar graphs in the classical LOCAL model of
distributed computing.
We assume familiarity with this model and refer to the survey~\cite{suomela2013survey} for extensive
background on distributed computing and the LOCAL model.

We use standard notation from graph theory and refer to the
textbook~\cite{diestel} for extensive background.  All graphs in this
paper are undirected and simple. We write $V(G)$ for the vertex set of
a graph $G$ and~$E(G)$ for its edge set. The \emph{girth} of a graph
$G$ is the length of a shortest cycle in $G$. A graph is called
\emph{triangle-free} if it does not contain a triangle, that is, a
cycle of length three as a subgraph. Equivalently, a triangle-free
graph is a graph of girth at least four.

A graph is \emph{bipartite} if its vertex set can be partitioned into
two parts such that all its edges are incident with two vertices from
different parts. More generally, the \emph{chromatic number}~$\chi(G)$
of a graph $G$ is the minimum number $k$ such that the vertices of $G$
can be partitioned into $k$ parts such that all edges are incident
with two vertices from different parts. Hence, the bipartite graphs
are exactly the graphs with chromatic number two. A set $A$ is
\emph{independent} if all two distinct vertices $u,v\in A$ are
non-adjacent.  Every graph $G$ contains an independent set of size at
least $\lceil|V(G)|/\chi(G)\rceil$.

\pagebreak
A graph is \emph{planar} if it can be embedded in the plane, that is,
it can be drawn on the plane in such a way that its edges intersect
only at their endpoints. By the famous theorem of Wagner, planar
graphs can be characterized as those graphs that exclude the complete
graph $K_5$ on five vertices and the complete bipartite $K_{3,3}$ with
parts of size three as a minor.  A graph~$H$ is a \emph{minor} of a
graph~$G$, written~$H\minor G$, if there is a set
\mbox{$\{G_v :v\in V(H)\}$} of pairwise vertex disjoint and connected
subgraphs $G_v\subseteq G$ such that if~$\{u,v\}\in E(H)$, then there
is an edge between a vertex of~$G_u$ and a vertex of~$G_v$. We
call~$V(G_v)$ the \emph{branch set} of $v$ and say that it is
\emph{contracted} to the vertex~$v$.  We call $H$ a \emph{$1$-shallow
  minor}, written~$H\minor_1 G$, if $H\minor G$ and there is a minor
model \mbox{$\{G_v :v\in V(H)\}$} witnessing this, such that all
branch sets $G_v$ have radius at most $1$, that is, in each $G_v$
there exists $w$ adjacent to all other vertices of $G_v$. In other
words, $H\minor_1 G$ if $H$ is obtained from $G$ by deleting some
vertices and edges and then contracting a set of pairwise disjoint
stars. We refer to~\cite{nevsetvril2012sparsity} for an in-depth study
of the theory of sparsity based on shallow minors.

A graph is \emph{outerplanar} if it has an embedding in the plane such
that all vertices belong to the unbounded face of the embedding.
Equivalently, a graph is outerplanar if it does not contain the
complete graph $K_4$ on four vertices and the complete bipartite graph
$K_{2,3}$ with parts of size $2$ and $3$, respectively, as a minor. If
$J\minor H$ and $H\minor G$, then $J\minor G$, hence a minor of a
planar graph is again planar and a minor of an outerplanar graph is
again outerplanar.

By Euler's formula, planar graphs are sparse: every planar $n$-vertex
graph ($n\geq 3$) has at most $3n-6$ edges (and a graph with at most
two vertices has at most one edge). The ratio $|E(G)|/|V(G)|$ is
called the \emph{edge density} of $G$. In particular, every planar
graph $G$ has edge density strictly smaller than three.

\begin{lemma}\label{lem:densities}
  Let $G$ be a planar graph. Then the edge density of $G$ is strictly
  smaller than $3$ and $\chi(G)\leq 4$. Furthermore,

  \vspace{-1mm}
  \begin{enumerate}
  \item if $G$ is bipartite, then the edge density of $G$ is strictly
    smaller than $2$ and $\chi(G)\leq 2$,\smallskip
  \item if $G$ is triangle-free or outerplanar, then the edge density
    of $G$ is strictly smaller than $2$ and $\chi(G)\leq 3$.
  \end{enumerate}
\end{lemma}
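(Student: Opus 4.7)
The plan is to derive every edge-density bound from Euler's formula applied to a plane embedding, using the exclusion of short cycles to lower bound the length of each face boundary, and to invoke classical colouring theorems for the chromatic-number statements.

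For the main bound I would assume $G$ is connected with $n := |V(G)| \geq 3$; the remaining cases are immediate since then $G$ has at most one edge. Fix a plane embedding with $f$ faces. Every face boundary is a closed walk of length at least $3$, and each edge borders at most two face boundaries, so $2|E(G)| \geq 3f$. Combined with Euler's formula $n - |E(G)| + f = 2$, this yields $|E(G)| \leq 3n - 6$, and hence $|E(G)|/n \leq 3 - 6/n < 3$. The bound $\chi(G) \leq 4$ is the Four Colour Theorem, invoked as a black box.

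For item (1), bipartiteness gives $\chi(G) \leq 2$ by definition; since bipartite graphs contain no odd cycles, every face boundary has length at least $4$, so $2|E(G)| \geq 4f$, and Euler's formula upgrades the bound to $|E(G)| \leq 2n - 4$. For the triangle-free case of item (2) the same face-length inequality applies because any triangular face would trace a $3$-cycle, so the edge bound $|E(G)| \leq 2n - 4$ follows identically; the chromatic bound $\chi(G) \leq 3$ is then Gr\"otzsch's theorem. For outerplanar $G$ I would place a new vertex inside the unbounded face of an outerplanar embedding and join it to every vertex of $G$; the resulting graph is planar on $n+1$ vertices with $|E(G)| + n$ edges, and the planar bound gives $|E(G)| \leq 2n - 3$. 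For $\chi(G) \leq 3$, I would invoke the standard fact that outerplanar graphs are $2$-degenerate and conclude by greedy colouring in a $2$-degenerate elimination order.

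The main obstacle is conceptual rather than computational: one must recognise which classical theorem supplies each chromatic bound (the Four Colour Theorem for planar, Gr\"otzsch for triangle-free planar, $2$-degeneracy for outerplanar) and, for outerplanarity, carry out the one extra planarisation trick of adding an apex vertex in the outer face; the face-counting arithmetic from Euler's formula is otherwise entirely routine.
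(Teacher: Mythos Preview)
Your argument is correct and entirely standard. Note, however, that the paper does not actually prove \cref{lem:densities}: it is stated as a collection of classical facts, with the textbook of Diestel cited for background and only the sentence immediately preceding the lemma (invoking Euler's formula for the bound $|E(G)|\leq 3n-6$) offered as justification. There is therefore nothing in the paper to compare your proof against. Your route---face-length counting from Euler's formula for the density bounds, the apex-vertex trick for outerplanar graphs, and black-box appeals to the Four Colour Theorem, Gr\"otzsch's theorem, and $2$-degeneracy for the chromatic numbers---is exactly the textbook derivation one would supply if asked to fill in the details.

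One cosmetic remark: your phrase ``the remaining cases are immediate since then $G$ has at most one edge'' covers $n\leq 2$ but not disconnected $G$; you presumably intend to also reduce to connected components first (the density bound on each component yields the bound on the whole by averaging), which is equally immediate but worth stating separately.
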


For a graph $G$ and $v\in V(G)$ we write
$N(v)=\{u~:~\{u,v\}\in E(G)\}$ for the \emph{open neighborhood} of $v$
and $N[v]=N(v)\cup\{v\}$ for the \emph{closed neighborhood}
of~$v$. For a set $A\subseteq V(G)$ let $N[A]=\bigcup_{v\in A}N[v]$.
A~\emph{dominating set} in a graph~$G$ is a set $D\subseteq V(G)$ such
that $N[D]=V(G)$. We write $\gamma(G)$ for the size of a minimum
dominating set of $G$. For $W\subseteq V(G)$ we say that a set
$Z\subseteq V(G)$ \emph{dominates} $W$ if $W\subseteq N[Z]$.

\smallskip
In the following we mark important definitions and assumptions about
our input graph in gray boxes and steps of the algorithm in red boxes.

\begin{tcolorbox}
  We fix a planar graph $G$ and a minimum dominating set~$D$ of $G$
  with $\gamma \coloneqq |D|=\gamma(G)$.
\end{tcolorbox}


\section{Preprocessing}\label{sec:step1}

As outlined in the introduction, our algorithm works in three phases.
In phase~$i$ for $1\leq i\leq 3$ we select a partial dominating set
$D_i$ and estimate its size in comparison to $D$. In the end we will
return $D_1\cup D_2\cup D_3$. We will call vertices have been selected
into a set $D_i$ \emph{green}, vertices that are dominated by a green
vertex but are not green themselves are called \emph{yellow} and all
vertices that still need to be dominated are called \emph{red}. In the
beginning, all vertices are marked red.

The first phase of our algorithm is similar to the first phase of the
algorithm of Lenzen et al.~\cite{lenzen2013distributed}. It is a
preprocessing step that leaves us with only vertices whose
neighborhoods can be dominated by a few other vertices. Lenzen et al.\
proved that there exist less than $3\gamma$
many vertices $v$ such
that the open neighborhood $N(v)$ of $v$ cannot be dominated by $6$
vertices of $V(G)\setminus \{v\}$~\cite[Lemma
6.3]{lenzen2013distributed}. The lemma can be generalized to more
general graphs, see~\cite{amiri2019distributed}). We prove the
following lemma, which is stronger in the sense that the number of
vertices required to dominate the open neighborhoods is smaller than
$6$, at the cost of having slightly more vertices with that property.

\smallskip We define $D_1$ as the set of all vertices whose
neighborhood cannot be dominated by $3$ other vertices.
\begin{tcolorbox}[colback=red!5!white,colframe=red!50!black]
\vspace{-4mm}
  \begin{align*}
  D_1\coloneqq \{v\in V(G) : \text{ for all sets } A\subseteq V(G)\setminus \{v\}
  & \text{ with $N(v)\subseteq N[A]$ we have $|A|> 3\}$.}
  \end{align*}
\end{tcolorbox}

\smallskip

We prove a very general lemma that can be applied also for more
general graph classes, even though we will apply it only for planar
graphs.  Hence, in the following lemma, $G$ can be an arbitrary graph,
while in the following lemmas $G$ will again be the planar graph that
we fixed in the beginning.

\begin{lemma}\label{lenzen-improved}
  Let $G$ be a graph, let $D$ be a minimum dominating set of $G$ of
  size $\gamma$ and let $\nn$ be an integer strictly larger than the
  edge density of a densest bipartite $1$-shallow minor of $G$. Let
  $\hat{D}$ be the set of vertices $v\in V(G)$ whose neighborhood
  cannot be dominated by $(2\nn-1)$ vertices of $D$ other than $v$,
  that is,
  \[
    \text{ $\hat{D}\coloneqq \{v\in V(G) :$ for all sets
      $A\subseteq D\setminus \{v\}$ with $N(v)\subseteq N[A]$ we have
      $|A|> (2\nn-1)\}$.}
  \]
  Then $|\hat{D}\setminus D| < \chi(G)\cdot\gamma$.
\end{lemma}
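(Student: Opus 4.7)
I argue by contradiction and construct a bipartite $1$-shallow minor $M$ of $G$ whose edge density reaches $\nn$, contradicting the choice of $\nn$. Suppose $|\hat{D} \setminus D| \geq \chi(G) \cdot \gamma$. A proper $\chi(G)$-colouring of $G$ partitions $\hat{D} \setminus D$ into $\chi(G)$ colour classes, so some class contributes at least $\gamma$ vertices; let $I$ be such a class, an independent set with $|I| \geq \gamma$. By passing to a subset I may assume $|I| = \gamma$. For every $v \in I$, $v \notin D$ gives $D \setminus \{v\} = D$, and by the definition of $\hat{D}$ every $A \subseteq D$ with $N(v) \subseteq N[A]$ satisfies $|A| \geq 2\nn$. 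Fix a minimum such $A_v$; by minimality, for each $d \in A_v$ there exists a \emph{private witness} $w_{v, d} \in N[d] \cap N(v) \setminus N[A_v \setminus \{d\}]$, and for fixed $v$ these witnesses are pairwise distinct and all lie outside $I$.

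I will build $M$ as follows. Each $v \in I$ is a singleton branch set. I partition $V(G) \setminus I$ into disjoint stars of radius at most one, each centred at a vertex of $D$: every $u \in V(G) \setminus (I \cup D)$ is assigned to one of its $D$-neighbours (which exists because $D$ dominates $G$), and each $d \in D$ contributes the star $B_d$ consisting of $d$ together with the vertices assigned to it. Taking the bipartite part of the resulting $1$-shallow minor yields a bipartite $1$-shallow minor $M$ of $G$ with sides $I$ and $\{B_d : d \in D\}$, and $|V(M)| = |I| + |D| = 2\gamma$. Writing $\tau(u) \in D$ for the centre of $u$'s star, the $M$-neighbours of $v \in I$ are the stars $B_d$ with $d \in \tau(N(v))$, so $|E(M)| = \sum_{v \in I} |\tau(N(v))|$.

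The key step is to show $|\tau(N(v))| \geq |A_v| \geq 2\nn$ for every $v \in I$, which gives $|E(M)| \geq 2\nn \gamma = \nn \cdot |V(M)|$ and contradicts the strict inequality $|E(M)| < \nn \cdot |V(M)|$ implied by the choice of $\nn$. For every $d \in A_v$, one of three contributions lies in $\tau(N(v))$: (i) $d$ itself, if $d \in N(v)$; (ii) $w_{v, d}$, if $d \notin N(v)$ and $w_{v, d} \in D$ (then $w_{v, d} \in D \setminus A_v$ by privacy); (iii) $\tau(w_{v, d})$, if $d \notin N(v)$ and $w_{v, d} \notin D$. Type (i) contributions lie in $A_v \cap N(v)$, type (ii) in $D \setminus A_v$, and type (iii) contributions can be arranged to lie in $A_v \setminus N(v)$ by setting $\tau(w_{v, d}) = d$. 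The three families are then pairwise disjoint, and by privacy the contributions within each family are distinct, giving $|A_v|$ distinct elements of $\tau(N(v))$.

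The main obstacle is the type (iii) arrangement: a single vertex outside $D$ may serve as a private witness for several distinct pairs $(v_i, d_i)$ with $d_i \neq d_j$, yet its star-centre $\tau$-value can equal only one of them. Resolving this requires a careful combinatorial selection of both witnesses (using that the private-witness set $N[d] \cap N(v) \setminus N[A_v \setminus \{d\}]$ often admits multiple choices) and of the star partition of $V(G) \setminus I$. I expect this to require a Hall-type or greedy matching argument, exploiting that $d$ itself is always a valid star-centre for $w_{v, d}$, to yield a globally consistent scheme that completes the contradiction.
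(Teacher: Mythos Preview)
Your setup is exactly the paper's: argue by contradiction, extract an independent set $I\subseteq \hat D\setminus D$ of size~$\gamma$, make each $v\in I$ a singleton branch set, and partition $V(G)\setminus I$ into radius-$1$ stars centred at vertices of $D$ to obtain a bipartite $1$-shallow minor on $2\gamma$ vertices. The paper uses the canonical assignment ``smallest-index dominator'', but any assignment $\tau$ works.

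Where you go astray is the ``key step''. You try to inject $A_v$ into $\tau(N(v))$ via private witnesses and then worry about a global consistency obstacle that would need a Hall-type argument. None of this is needed, and the obstacle is illusory. The point is simply that \emph{$\tau(N(v))$ is itself a subset of $D\setminus\{v\}$ that dominates $N(v)$}. Indeed, $I$ is independent, so every $u\in N(v)$ lies in $V(G)\setminus I$ and hence in some star $B_{\tau(u)}$; by construction $u\in N[\tau(u)]$, so $N(v)\subseteq N[\tau(N(v))]$. Since $v\notin D$ we have $\tau(N(v))\subseteq D=D\setminus\{v\}$, and the defining property of $\hat D$ gives $|\tau(N(v))|\geq 2\nn$ directly. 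This is one line, uses no minimality of $A_v$, no private witnesses, and places no constraint whatsoever on the choice of $\tau$. With this in hand your edge count $|E(M)|\geq 2\nn\gamma=\nn\,|V(M)|$ and the contradiction follow immediately, exactly as in the paper.

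In short: your architecture is correct and matches the paper; the gap you flag is self-inflicted and disappears once you observe that $\tau(N(v))$ is already an admissible $A$ in the definition of~$\hat D$.
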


Recall that minors of planar graphs are again planar, hence, the
maximum edge density of a bipartite $1$-shallow minor of a planar
graph is smaller than $2$ and hence we can choose $\nn=2$ for the case
of planar graphs and we note the following corollary.
\begin{cor}\label{lenzen-improved-planar}\hspace{-1.7mm}\textbf{.}
  Let $\hat{D}$ be the set of vertices $v$ whose neighborhood cannot
  be dominated by $3$ vertices of $D$ other than $v$, that is,
  \[
    \text{ $\hat{D}\coloneqq \{v\in V(G) :$ for all sets
      $A\subseteq D\setminus \{v\}$ with $N(v)\subseteq N[A]$ we have
      $|A|>3\}$.}
  \]
  Then $|\hat{D}\setminus D| < 4\gamma$.
\end{cor}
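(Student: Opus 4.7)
The plan is to deduce the corollary as a direct specialization of \cref{lenzen-improved} to the planar case, by choosing $\nn = 2$ and bounding $\chi(G)$ via the four-colour theorem (as recorded in \cref{lem:densities}). Since the corollary is stated with the same set $\hat D$ as the lemma but with the specific numbers $3$ and $4\gamma$, the whole task reduces to verifying that these specific numbers are legitimate instances of the parameters $(2\nn-1)$ and $\chi(G)\cdot\gamma$ of the lemma.

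First I would justify that $\nn = 2$ is an admissible choice. The lemma requires $\nn$ to be an integer strictly exceeding the edge density of a densest bipartite $1$-shallow minor of $G$. Because the class of planar graphs is minor-closed, every $1$-shallow minor of the planar graph $G$ fixed in the gray box is itself planar; in particular, every bipartite $1$-shallow minor of $G$ is a bipartite planar graph. \cref{lem:densities}(1) then guarantees that any such graph has edge density strictly smaller than $2$, so $\nn = 2$ satisfies the hypothesis of \cref{lenzen-improved}.

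With $\nn = 2$ the threshold $2\nn - 1$ equals $3$, so the set
\[
  \hat{D} = \{v\in V(G) : \text{for all } A\subseteq D\setminus\{v\} \text{ with } N(v)\subseteq N[A] \text{ we have } |A|> 3\}
\]
defined in the corollary coincides literally with the set $\hat D$ produced by the lemma. Applying \cref{lenzen-improved} therefore yields $|\hat{D}\setminus D| < \chi(G)\cdot \gamma$. Since $G$ is planar, \cref{lem:densities} gives $\chi(G) \leq 4$, and we conclude $|\hat{D}\setminus D| < 4\gamma$, as required.

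There is essentially no obstacle: the only point where planarity is used non-trivially is in bounding the edge density of bipartite $1$-shallow minors, and this is immediately handled by the minor-closedness of planarity together with \cref{lem:densities}(1). Once the general lemma is in hand, the corollary is a pure substitution.
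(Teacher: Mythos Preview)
Your proof is correct and follows exactly the paper's own approach: specialize \cref{lenzen-improved} with $\nn=2$, justified via minor-closedness of planarity and \cref{lem:densities}(1), and then use $\chi(G)\leq 4$ from \cref{lem:densities} to obtain the bound $4\gamma$. The paper states precisely this reasoning in the paragraph immediately preceding the corollary.
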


\begin{proof}[of~\cref{lenzen-improved}]
  Assume $D=\{b_1,\ldots,b_\gamma\}$.  Assume that there are
  $\chi(G)\cdot\gamma$ vertices
  $a_1,\ldots,a_{\chi(G)\gamma}\not\in D$ satisfying the above
  condition. As the chromatic number is monotone over subgraphs, the
  subgraph induced by the $a_is$ is also $\chi(G)$-chromatic, so we
  find an independent subset of the $a_is$ of size $\gamma$. We can
  hence assume that $a_1,\ldots,a_{\gamma}$ are not connected by an
  edge. We proceed towards a contradiction.

  We construct a bipartite $1$-shallow minor $H$ of $G$ with the
  following \mbox{$2\gamma$} branch sets. For every
  \mbox{$i\le \gamma$} we have a branch set $A_i=\{a_i\}$ and a branch
  set
  \mbox{$B_i=N[b_i]\setminus (\{a_1,\ldots, a_{\gamma}\}\cup
    \bigcup_{j<i}N[b_j]\cup \{b_{i+1},\ldots, b_\gamma\})$}. Note that
  the $B_i$ are vertex disjoint and hence we define proper branch
  sets. Intuitively, for each vertex $v\in N(a_i)$ we mark the
  smallest $b_j$ that dominates $v$ as its dominator. We then contract
  the vertices that mark $b_j$ as a dominator together with $b_j$ into
  a single vertex. Note that because the $a_i$ are independent, the
  vertices $a_i$ themselves are not associated to a dominator as no
  $a_j$ lies in $N(a_i)$ for~$i\neq j$.  Denote by
  $a_1',\ldots, a_{\gamma}',b_1',\ldots, b_\gamma'$ the associated
  vertices of $H$. Denote by $A$ the set of the $a_i's$ and by $B$ the
  set of the~$b_j's$.  We delete all edges between vertices of
  $B$. The vertices of $A$ are independent by construction. Hence, $H$
  is a bipartite $1$-shallow minor of $G$.  By the assumption that
  $N(a_i)$ cannot be dominated by $2\nabla-1$ elements of $D$, we
  associate at least $2\nn$ different dominators with the vertices of
  $N(a_i)$. Note that this would not necessarily be true if $A$ was
  not an independent set, as all $a_j\in N(a_i)$ would not be
  associated a dominator.

  Since $\{b_1,\ldots, b_\gamma\}$ is a dominating set of $G$ and by
  assumption on $N(a_i)$, we have that in $H$, every~$a'_i$ has at
  least $2\nn$ neighbors in $B$. Hence,
  $|E(H)| \ge 2\nn|V(A)| = 2\nn\gamma$. As $|V(H)|=2\gamma$ we
  conclude $|E(H)|\ge \nn|V(H)|$. This however is a contradiction,
  as~$\nn$ is strictly larger than the edge density of a densest
  bipartite $1$-shallow minor of $G$.
\end{proof}

\bigskip

Let us fix the set $\hat{D}$ for our graph $G$.
\begin{tcolorbox}
\vspace{-4mm}
    \begin{align*}
      \hat D\coloneqq \{v\in V(G) : \text{ for all sets } A\subseteq D\setminus \{v\}
      & \text{ with $N(v)\subseteq N[A]$} \text{ we have $|A|>3\}$.}
  \end{align*}
\end{tcolorbox}
\smallskip

Note that $\hat{D}$ cannot be computed by a local algorithm as we do
not know the set $D$. It will only serve as an auxiliary set in our
analysis.

\smallskip The first phase of the algorithm is to compute the set
$D_1$, which can be done in 2 rounds of communication. Obviously, if
the open neighborhood of a vertex $v$ cannot be dominated by $3$
vertices from $V(G)\setminus\{v\}$, then in particular it cannot be
dominated by $3$ vertices from $D\setminus\{v\}$.  Hence
$D_1\subseteq \hat{D}$ and we can bound the size of $D_1$ by that of
$\hat{D}$.

\smallskip
\begin{lemma}\label{lem:D-hat}
  We have $D_1\subseteq \hat D$, $|\hat{D}\setminus D|< 4\gamma$, and
  $|\hat{D}|< 5\gamma$.
\end{lemma}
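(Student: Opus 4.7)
The plan is to prove the three statements in order, each following either from a set-theoretic observation or from results already established in the excerpt. None of the three requires new technical machinery: the heavy lifting was done in \cref{lenzen-improved} and its specialization \cref{lenzen-improved-planar}.

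First I would verify the containment $D_1 \subseteq \hat D$ by a one-line argument about the defining conditions. Recall that membership in $D_1$ rules out any $3$-element dominator for $N(v)$ drawn from $V(G) \setminus \{v\}$, whereas membership in $\hat D$ only rules out such dominators drawn from $D \setminus \{v\}$. Since $D \setminus \{v\} \subseteq V(G) \setminus \{v\}$, the $D_1$ condition is strictly stronger, so $v \in D_1$ forces $v \in \hat D$.

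Next I would handle $|\hat{D} \setminus D| < 4\gamma$, which is immediate: the set $\hat D$ defined in the gray box is syntactically identical to the set $\hat D$ appearing in \cref{lenzen-improved-planar}, so the bound $|\hat D \setminus D| < 4\gamma$ is simply a restatement of that corollary applied to our fixed planar $G$ and its fixed minimum dominating set $D$.

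Finally, for $|\hat D| < 5\gamma$ I would decompose $\hat D = (\hat D \cap D) \cup (\hat D \setminus D)$ disjointly. The first piece satisfies $|\hat D \cap D| \leq |D| = \gamma$, and the second piece is bounded by $4\gamma$ from the previous step, so the sum is strictly smaller than $5\gamma$. The only mild subtlety to flag is that $\hat D \cap D$ can indeed be nonempty (vertices of $D$ may themselves have neighborhoods that cannot be covered by $3$ other elements of $D$), which is why the bound is $5\gamma$ rather than $4\gamma$; no obstacle arises, but it is worth a sentence to make clear why we need both pieces of the decomposition.
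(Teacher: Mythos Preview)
Your proposal is correct and matches the paper's own proof essentially line for line: the paper also derives $D_1\subseteq\hat D$ from the observation that the defining condition for $D_1$ quantifies over the larger set $V(G)\setminus\{v\}\supseteq D\setminus\{v\}$, and then invokes \cref{lenzen-improved-planar} for the two size bounds (with the $5\gamma$ bound following by adding $|D|=\gamma$ to $|\hat D\setminus D|<4\gamma$). Your write-up is simply more explicit about the decomposition $\hat D=(\hat D\cap D)\cup(\hat D\setminus D)$ than the paper's one-sentence proof.
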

\begin{proof}
  \cref{lem:D-hat} follows the observation above together with
  \cref{lenzen-improved-planar}.
\end{proof}

From \cref{lem:D-hat} we can conclude that $|D_1|< 5\gamma$. However,
it is intuitively clear that every vertex that we pick from the
minimum dominating set~$D$ is optimal progress towards dominating the
whole graph. We will later show that this intuition is indeed true for
our algorithm, that is, our algorithm performs worst when
$D_1\cap D=\emptyset$, which will later in fact allow us to
estimate~$|D_1|<4\gamma$.

\smallskip

We mark the vertices of $D_1$ that we add to the dominating set in the
first phase of the algorithm as green, the neighbors of $D_1$ as
yellow and leave all other vertices red. Denote the set of red
vertices by~$R$, that is, $R=V(G)\setminus N[D_1]$.  For $v\in V(G)$
let $N_R(v)\coloneqq N(v)\cap R$ and $\delta_R(v)\coloneqq |N_R(v)|$
be the \emph{residual degree} of~$v$, that is, the number of neighbors
of $v$ that still need to be dominated.

\smallskip By definition of $D_1$, the neighborhood of every non-green
vertex can be dominated by at most $3$ other vertices. This holds true
as well for the subset $N_R(v)$ of neighbors that still need to be
dominated.  Let us fix such a small dominating set for the red
neighborhood of every non-green vertex.

\begin{tcolorbox}
  For every $v\in V(G)\setminus D_1$, we fix
  $A_v\subseteq V(G)\setminus \{v\}$ such that: \center
  $N_R(v)\subseteq N[A_v]$ and $|A_v|\leq 3$.
\end{tcolorbox}

There are potentially many such sets $A_v$ -- we fix one such set
arbitrarily.  Let us stress that even though we could compute the sets
$A_v$ in a local algorithm (making decisions based on vertex ids), we
only use these sets for our further argumentation and do not need to
compute them.


\section{Analyzing the local dominators}\label{sec:D2}

The second phase of our algorithm is inspired by results of Czygrinow
et al.~\cite{czygrinow2018distributed} and the greedy domination
algorithm for biclique-free graphs of~\cite{siebertz2019greedy}.
Czygrinow et al.~\cite{czygrinow2018distributed} defined the notion of
\emph{pseudo-covers}, which provide a tool to carry out a fine grained
analysis of vertices that can potentially belong to the sets $A_v$
used to dominate the red neighborhood~$N_R(v)$ of a vertex $v$. This
tool can in fact be applied to much more general graphs than planar
graphs, namely, to all graphs that exclude some complete bipartite
graph~$K_{t,t}$.  A refined analysis for classes of bounded expansion
was provided by Kublenz et al.~\cite{kublenz2020distributed}.  We
provide an even finer analysis for planar graphs on which we base a
second phase of our distributed algorithm.

We first describe what our algorithm computes, and then provide bounds
on the number of selected vertices. Intuitively, we select every pair
of vertices with sufficiently many neighbors in common.

\begin{tcolorbox}[colback=red!5!white,colframe=red!50!black]
\begin{itemize}
\item For $v\in V(G)$ let
  $B_v\coloneqq \{z\in V(G)\setminus \{v\}: |N_R(v)\cap N_R(z)|\geq
  10\}$.\smallskip
\item Let $W$ be the set of vertices $v\in V(G)$ such that
  $B_v \neq \emptyset$.\smallskip
\item Let $D_2\coloneqq \bigcup\limits_{v\in W} (\{v\}\cup B_v)$.
\end{itemize}
\end{tcolorbox}

Once $D_1$ has been computed in the previous phase, 2 more rounds of
communication are enough to compute the sets $B_v$ and $D_2$.
Before we update the residual degrees, let us analyze the sets $B_v$
and~$D_2$.  First note that the definition is symmetric: since
$N_R(v)\cap N_R(z)=N_R(z)\cap N_R(v)$ we have for all $v,z\in V(G)$ if
$z\in B_v$, then $v\in B_z$. In particular, if $v\in D_1$ or
$z\in D_1$, then $N_R(v)\cap N_R(z)=\emptyset$, which immediately
implies the following lemma.

\begin{lemma}\label{lem:WcapD1}
  We have $W\cap D_1=\emptyset$ and for every $v\in V(G)$ we have
  $B_v\cap D_1=\emptyset$.
\end{lemma}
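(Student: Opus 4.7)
The plan is to observe that membership in $D_1$ forces the residual degree to be zero, which immediately kills the intersections defining $B_v$. First I would unpack the definition of red vertices: $R = V(G) \setminus N[D_1]$, so if $u \in D_1$, then $u \in N[D_1]$ and hence $u \notin R$, and moreover every $w \in N(u)$ also lies in $N[D_1]$ and is therefore not red either.

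Consequently, for any $u \in D_1$ the set $N_R(u) = N(u) \cap R$ is empty. This is the key mechanical fact, and I expect it to be the only substantive step.

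From here both claims are immediate. For $B_v \cap D_1 = \emptyset$: if $z \in D_1$ then $N_R(z) = \emptyset$, so $|N_R(v) \cap N_R(z)| = 0 < 10$, whence $z \notin B_v$. For $W \cap D_1 = \emptyset$: if $v \in D_1$ then $N_R(v) = \emptyset$, so $|N_R(v) \cap N_R(z)| = 0 < 10$ for every $z \neq v$; thus $B_v = \emptyset$ and $v \notin W$.

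There is no real obstacle here; the lemma is essentially a sanity check that the data computed in phase 2 lives entirely outside the already-selected phase 1 set, and it rests entirely on the trivial observation that neighbors of green vertices are yellow, not red.
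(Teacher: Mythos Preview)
Your proof is correct and follows essentially the same route as the paper: the paper observes (in the paragraph preceding the lemma) that if $v\in D_1$ or $z\in D_1$ then $N_R(v)\cap N_R(z)=\emptyset$, which is exactly your key fact that $u\in D_1$ implies $N_R(u)=\emptyset$, and both conclusions follow immediately.
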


Now we prove that for every $v\in W$, the set $B_v$ cannot be too big,
and has nice properties.
\begin{lemma}\label{lem:dominating-dominators}
  For all vertices $v\in W$ we have

  \vspace{-5pt}
  \begin{itemize}
  \item $B_v \subseteq A_v$ (hence $|B_v|\le 3$), and \smallskip
  \item if $v\not\in \hat{D}$, then $B_v\subseteq D$.
  \end{itemize}
\end{lemma}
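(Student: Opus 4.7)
My plan is to prove a single unified statement from which both conclusions follow, namely: \emph{for any $v\in W$ and any $A\subseteq V(G)\setminus\{v\}$ with $|A|\le 3$ and $N_R(v)\subseteq N[A]$, we have $B_v\subseteq A$}. Specialising $A=A_v$ yields the first bullet point, and specialising $A$ to a witness of $v\notin\hat D$ yields the second: if $v\notin\hat D$, then by definition of $\hat D$ there exists $A\subseteq D\setminus\{v\}$ with $|A|\le 3$ and $N(v)\subseteq N[A]$, whence in particular $N_R(v)\subseteq N[A]$, so $B_v\subseteq A\subseteq D$.

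To prove the unified statement, I take $z\in B_v$ and suppose for contradiction that $z\notin A$. By definition of $B_v$ the set $S\coloneqq N_R(v)\cap N_R(z)$ has size at least $10$. Split $S$ according to whether its vertices lie in $A$ or not: the part $S\cap A$ contains at most $|A|\le 3$ vertices, so the part $S\setminus A$ contains at least $7$ vertices. Every vertex of $S\setminus A$ lies in $N[A]\setminus A$, hence is adjacent to some element of $A$. By pigeonhole on these $\ge 7$ vertices distributed among the $\le 3$ elements of $A$, some $a\in A$ has at least $\lceil 7/3\rceil = 3$ neighbours $u_1,u_2,u_3\in S\setminus A$.

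Now $v,z,a$ are pairwise distinct (we have $v\ne z$ since $z\in B_v$, $v\ne a$ since $a\in A$, and $z\ne a$ by our contradiction hypothesis), and the vertices $u_1,u_2,u_3$ are pairwise distinct and distinct from $v,z,a$ (they lie in $N(v)\cap N(z)$, hence differ from $v,z$, and they lie outside $A$, hence differ from $a$). Each $u_i$ is adjacent to all three of $v,z,a$ (to $v$ since $u_i\in N_R(v)\subseteq N(v)$; to $z$ since $u_i\in N_R(z)\subseteq N(z)$; to $a$ by the pigeonhole choice). This exhibits $K_{3,3}$ as a subgraph of $G$, contradicting planarity of $G$. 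Hence $z\in A$, completing the argument.

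The main (minor) obstacle is recognising that the constant $10$ in the definition of $B_v$ is tuned precisely so that after removing up to $3$ vertices of $A$ and splitting the remaining $\ge 7$ among $3$ dominators, the pigeonhole yields $\lceil 7/3\rceil = 3$ common neighbours, which is exactly the threshold for forcing a forbidden $K_{3,3}$ in a planar graph; everything else is routine bookkeeping of set memberships and distinctness.
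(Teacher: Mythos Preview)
Your proof is correct and follows essentially the same approach as the paper: pigeonhole among the at most three dominators of $N_R(v)$ to find one sharing at least three open neighbours with both $v$ and $z$, yielding a forbidden $K_{3,3}$. The only cosmetic differences are that the paper pigeonholes the full ten vertices into \emph{closed} neighbourhoods $N[v_i]$ (getting $\lceil 10/3\rceil=4$, then dropping to $3$ in the open neighbourhood), whereas you first strip off the $\le 3$ vertices of $A$ and pigeonhole the remaining $\ge 7$ into open neighbourhoods directly; and you package both bullet points into a single unified claim rather than repeating the argument, which is a mild expository improvement.
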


\begin{proof}
  Assume $A_v=\{v_1,v_2,v_3\}$ (a set of possibly not distinct
  vertices) and assume there exists
  $z\in V(G)\setminus \{v,v_1,v_2, v_3\}$ with
  $|N_R(v) \cap N_R(z)| \geq 10$.  As $v_1, v_2, v_3$ dominate $N_R(v)$,
  and hence also \mbox{$N_R(v)\cap N_R(z)$}, there must be some~$v_i$,
  $1\leq i\leq 3$, with
  \mbox{$|N_R(v) \cap N_R(z) \cap N[v_i]| \geq \lceil 10/3\rceil \geq
    4$}.  Therefore, \mbox{$|N_R(v) \cap N_R(z) \cap N(v_i)| \geq 3$},
  which shows that $K_{3,3}$ is a subgraph of~$G$, contradicting the
  assumption that $G$ is planar.

  If furthermore $v\not\in \hat{D}$, by definition of $\hat{D}$, we
  can find $w_1,w_2, w_3$ from $D$ that dominate $N(v)$, and in
  particular $N_R(v)$.  If $z\in V(G)\setminus \{v,w_1,w_2, w_3\}$
  with $|N_R(v) \cap N_R(z)| \geq 10$ we can argue as above to obtain
  a contradiction.
\end{proof}

%

%


Let us now analyze the size of $D_2$. For this we refine the set $D_2$
and define
\begin{tcolorbox}
  \begin{enumerate}
    \item $D_2^1\coloneqq \bigcup_{v\in W\cap D}
    (\{v\}\cup B_v)$, \smallskip
    \item $D_2^2\coloneqq \bigcup_{v\in W\cap (\hat{D}\setminus D)}
    (\{v\}\cup B_v)$, and \smallskip
    \item $D_2^3\coloneqq \bigcup_{v\in W\setminus (D\cup \hat{D})}
    (\{v\}\cup B_v)$.
  \end{enumerate}
\end{tcolorbox}

\smallskip
Obviously $D_2=D_2^1\cup D_2^2\cup D_2^3$. We now bound the size of the
refined sets $D_2^1,D_2^2$ and $D_2^3$.

\begin{lemma}\label{lem:size-D21}
  $|D_2^1\setminus D|\leq 3\gamma$.
\end{lemma}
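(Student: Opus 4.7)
The plan is to unfold the definition of $D_2^1$ and peel off the contribution of vertices that already lie in $D$. By construction, $D_2^1 = \bigcup_{v \in W \cap D}(\{v\} \cup B_v)$, and every index $v$ in this union already lies in $D$, so the singleton part $\{v\}$ contributes nothing to $D_2^1 \setminus D$. Hence $D_2^1 \setminus D \subseteq \bigcup_{v \in W \cap D} B_v$.

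From here I would apply \cref{lem:dominating-dominators}, whose first bullet states that $B_v \subseteq A_v$ and therefore $|B_v| \leq 3$ for every $v \in W$. Since $W \cap D \subseteq D$, the index set has size at most $\gamma$. Combining these two facts by a union bound gives
\[
  |D_2^1 \setminus D| \;\leq\; \sum_{v \in W \cap D} |B_v| \;\leq\; 3 \cdot |W \cap D| \;\leq\; 3\gamma,
\]
which is the desired inequality.

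There is essentially no real obstacle here: the bound is a direct bookkeeping consequence of the size bound on $B_v$ proved in \cref{lem:dominating-dominators} together with the trivial observation that $|W \cap D| \leq \gamma$. The reason this case is so cheap is precisely that we are charging against vertices of $D$ itself; the harder cases will be \cref{lem:size-D21}'s counterparts for $D_2^2$ and $D_2^3$, where the indexing vertex $v$ no longer belongs to $D$ and a density or shallow-minor argument (in the spirit of \cref{lenzen-improved}) will be needed.
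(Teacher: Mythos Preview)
Your proof is correct and follows essentially the same route as the paper: drop the singletons $\{v\}$ (they lie in $D$), bound $D_2^1\setminus D$ by $\bigcup_{v\in W\cap D} B_v$, and apply the bound $|B_v|\le 3$ from \cref{lem:dominating-dominators} together with $|W\cap D|\le\gamma$. The only (minor) inaccuracy is in your closing commentary: the paper handles $D_2^3$ not by a density argument but by showing $D_2^3\subseteq D_2^1$, so it is actually the easiest of the three cases.
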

\begin{proof}
  We have
  \[|D_2^1\setminus D|= |\bigcup_{v\in W\cap D} (\{v\}\cup
    B_v)\setminus D|\leq |\bigcup_{v\in W\cap D}B_v|\leq \sum_{v\in
      W\cap D}|B_v|.\] By \cref{lem:dominating-dominators} we have
  $|B_v|\leq 3$ for all $v\in W$ and as we sum over $v\in W\cap D$ we
  conclude that the last term has order at most $3\gamma$.
\end{proof}

\begin{lemma}\label{lem:size-D22}
  $D_2^2 \subseteq \hat D$ and therefore
  $|D_2^2\setminus D|< 4\gamma$.
\end{lemma}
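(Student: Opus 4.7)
The plan is to unpack the definition of $D_2^2$ and show that every vertex appearing in it lies in $\hat D$, after which the size bound is immediate from \cref{lem:D-hat}.

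First I would observe that $D_2^2 = \bigcup_{v \in W \cap (\hat D \setminus D)} (\{v\} \cup B_v)$, so a vertex $u \in D_2^2$ is either some index vertex $v \in \hat D \setminus D$ (trivially in $\hat D$) or an element $z \in B_v$ for such a $v$. The task reduces to showing $z \in \hat D$ in this second case.

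The key step is a symmetry-plus-contrapositive argument using \cref{lem:dominating-dominators}. Since $z \in B_v$ means $|N_R(v) \cap N_R(z)| \geq 10$, and this quantity is symmetric in $v$ and $z$, we also have $v \in B_z$; in particular $B_z \neq \emptyset$, so $z \in W$. Now suppose toward a contradiction that $z \notin \hat D$. Then $z \in W \setminus \hat D$, so the second bullet of \cref{lem:dominating-dominators} applies and gives $B_z \subseteq D$. But $v \in B_z$ and $v \in \hat D \setminus D$, so $v \notin D$, a contradiction. Hence $z \in \hat D$, establishing $D_2^2 \subseteq \hat D$.

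With the inclusion in hand, $|D_2^2 \setminus D| \le |\hat D \setminus D| < 4\gamma$ follows directly from \cref{lem:D-hat}. I do not foresee a real obstacle here: the only subtle point is noticing that $B_v$ is defined symmetrically so that $z \in B_v$ forces $z \in W$, which is exactly what allows us to invoke \cref{lem:dominating-dominators} with $z$ in the role of $v$.
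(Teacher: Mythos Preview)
Your proposal is correct and follows exactly the same approach as the paper: use the symmetry of $B_v$ to get $v\in B_z$ (hence $z\in W$), then apply the second bullet of \cref{lem:dominating-dominators} in contrapositive form to force $z\in\hat D$, and finish with \cref{lem:D-hat}. The only difference is that you spell out explicitly that $z\in W$ before invoking \cref{lem:dominating-dominators}, which the paper leaves implicit.
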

\begin{proof}
  Let $v\in \hat{D}\setminus D$ and let $z\in B_v$. By symmetry,
  $v\in B_z$ and according to \cref{lem:dominating-dominators}, if
  $z\not\in \hat{D}$, then $v\in D$.  Since this is not the case, we
  conclude that $z\in\hat{D}$.  Hence $B_v\subseteq \hat{D}$ and, more
  generally, $D_2^2\subseteq \hat{D}$.  Finally, according to
  \cref{lem:D-hat} we have $|\hat{D}\setminus D|<4\gamma$.
\end{proof}

Finally, the set $D_2^3$, which appears largest at first glance, was
actually already counted, as shown in the next lemma.
\begin{lemma}\label{lem:size-D23}
  $D_2^3\subseteq D_2^1$.
\end{lemma}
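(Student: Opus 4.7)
The plan is to take an arbitrary $u \in D_2^3$ and show, in two cases, that $u$ lies in $D_2^1$. By definition, $u \in D_2^3$ means there exists some $v \in W \setminus (D \cup \hat{D})$ such that either $u = v$ or $u \in B_v$. The two key ingredients I will use are (i) the symmetry of the relation ``$z \in B_v \iff v \in B_z$'' noted just before \cref{lem:WcapD1}, and (ii) the second bullet of \cref{lem:dominating-dominators}, which says that for $v \in W \setminus \hat{D}$ the whole set $B_v$ is contained in the minimum dominating set $D$.

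Case 1: $u \in B_v$. Since $v \in W \setminus \hat{D}$, \cref{lem:dominating-dominators} gives $B_v \subseteq D$, so $u \in D$. By symmetry, $v \in B_u$; in particular $B_u \neq \emptyset$, so $u \in W$. Hence $u \in W \cap D$ and $u \in \{u\} \cup B_u \subseteq D_2^1$.

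Case 2: $u = v$. Since $v \in W$, the set $B_v$ is nonempty, so pick some $z \in B_v$. By \cref{lem:dominating-dominators} (applied again using $v \notin \hat{D}$) we have $z \in D$. By symmetry, $v \in B_z$, and since $B_z \neq \emptyset$ we get $z \in W$. Therefore $z \in W \cap D$ and $u = v \in B_z \subseteq \{z\} \cup B_z \subseteq D_2^1$.

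The argument is essentially just unwinding definitions and invoking symmetry together with \cref{lem:dominating-dominators}; I don't foresee a genuine obstacle. The only subtle point is in Case 2, where one has to rebrand $u = v$ as an element of $B_z$ for a well-chosen $z \in D$ — this requires both halves of the symmetry and nonemptiness of $B_v$, which is guaranteed by $v \in W$. Once that observation is made, the containment $D_2^3 \subseteq D_2^1$ follows immediately.
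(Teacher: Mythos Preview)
Your proof is correct and follows the same approach as the paper: both use the symmetry of the relation $z\in B_v\Leftrightarrow v\in B_z$ together with the second bullet of \cref{lem:dominating-dominators} to show that every element of $\{v\}\cup B_v$ (for $v\in W\setminus(D\cup\hat D)$) lies in some $\{z\}\cup B_z$ with $z\in W\cap D$. The paper's proof is more compressed---it essentially writes out your Case~2 and leaves Case~1 implicit---but the underlying argument is identical.
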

\begin{proof}
  If $v\not\in \hat{D}$, then $B_v\subseteq D$ by
  \cref{lem:dominating-dominators}.  Hence $v\in B_z$ for some
  $z\in D$, and $v\in D_2^1$.
\end{proof}

Again, it is intuitively clear that the situation when the sets
$D_2^i$ are large does not lead to the worst case for the overall
algorithm. For example, when $D_2^1$ is large we have added many
vertices of the optimum dominating set $D$. For a formal analysis, we
analyze the number of vertices of $D$ that have been selected so far.

\begin{tcolorbox}
  Let $\e\in [0,1]$ be such that $|(D_1 \cup D_2)\cap D| =\e\gamma$.
\end{tcolorbox}

%


\smallskip
\begin{lemma}\label{lem:size-D12}
  We have $|D_1\cup D_2| < 4\gamma+4\e\gamma$.
\end{lemma}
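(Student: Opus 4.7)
The plan is to split $D_1\cup D_2$ into the portion lying in $D$ and the portion outside $D$, bound the first by $\e\gamma$ by definition of $\e$, and then carefully decompose the outside-$D$ portion using the refined sets $D_2^1,D_2^2,D_2^3$.

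First, I would write
\[
|D_1\cup D_2| = |(D_1\cup D_2)\cap D| + |(D_1\cup D_2)\setminus D| = \e\gamma + |(D_1\cup D_2)\setminus D|,
\]
so the task reduces to bounding $|(D_1\cup D_2)\setminus D|$ by strictly less than $4\gamma + 3\e\gamma$. Using \cref{lem:size-D23} we have $D_2=D_2^1\cup D_2^2$ (since $D_2^3\subseteq D_2^1$), so
\[
(D_1\cup D_2)\setminus D \;\subseteq\; \bigl((D_1\cup D_2^2)\setminus D\bigr) \,\cup\, (D_2^1\setminus D).
\]

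Next I would handle the two pieces separately. For the first, \cref{lem:D-hat} gives $D_1\subseteq \hat D$, and \cref{lem:size-D22} gives $D_2^2\subseteq \hat D$, so $(D_1\cup D_2^2)\setminus D\subseteq \hat D\setminus D$, which by \cref{lem:D-hat} has size strictly less than $4\gamma$. This is the "free" $4\gamma$ in the target bound.

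The more delicate piece is $D_2^1\setminus D$. Here I would refine the estimate from \cref{lem:size-D21}: by definition $D_2^1=\bigcup_{v\in W\cap D}(\{v\}\cup B_v)$, hence
\[
|D_2^1\setminus D|\;\le\;\sum_{v\in W\cap D}|B_v|\;\le\;3\,|W\cap D|,
\]
where I used \cref{lem:dominating-dominators} for $|B_v|\le 3$. The key observation is that every $v\in W$ belongs to $D_2$ by construction, so $W\cap D\subseteq D_2\cap D\subseteq (D_1\cup D_2)\cap D$, which has size $\e\gamma$. Therefore $|D_2^1\setminus D|\le 3\e\gamma$. Summing gives
\[
|D_1\cup D_2| \;<\; \e\gamma + 4\gamma + 3\e\gamma \;=\; 4\gamma + 4\e\gamma,
\]
as required. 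The only subtlety I anticipate is making sure the containment $W\cap D\subseteq (D_1\cup D_2)\cap D$ is explicit so that the factor $\e$ on the $D_2^1$ bound is justified; everything else is an immediate assembly of the preceding lemmas.
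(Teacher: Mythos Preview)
Your proof is correct and follows essentially the same route as the paper's: split off $(D_1\cup D_2)\cap D$ of size $\e\gamma$, use $D_2^3\subseteq D_2^1$ to reduce to $D_1\cup D_2^2$ and $D_2^1$, bound the first via $\hat D\setminus D$ and the second by $3\e\gamma$ through $|B_v|\le 3$ summed over $W\cap D$. The only cosmetic difference is that the paper indexes the $B_v$-sum over $D\cap D_2^1$ rather than $W\cap D$, but since $W\cap D\subseteq D_2^1\cap D\subseteq (D_1\cup D_2)\cap D$ both yield the same $3\e\gamma$ bound.
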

\begin{proof}
  By \cref{lem:size-D23} we have $D_2^3\subseteq D_2^1$, hence,
  $D_1\cup D_2=D_1\cup D_2^1\cup D_2^2$. By \cref{lem:D-hat} we have
  $D_1 \subseteq \hat D$ and by \cref{lem:size-D22} we also have
  $D_2^2 \subseteq \hat D$, hence $D_1\cup D_2^2\subseteq \hat D$.
  Again by \cref{lem:D-hat}, $|\hat D \setminus D|<4\gamma$ and
  therefore $|(D_1 \cup D_2^2 )\setminus D| < 4 \gamma$.

  We have $W\cap D\subseteq D_2^1\cap D$, hence with
  \cref{lem:dominating-dominators} we conclude that
  \[
    \big\vert D_2^1 \setminus D \big\vert \leq
    \Big\vert\bigcup\limits_{v\in D \cap D_2^1}B_v\Big\vert \leq
    \sum\limits_{v\in D \cap D_2^1} |B_v| \leq 3\e\gamma,
  \]
  hence $(D_1\cup D_2)\setminus D<4\gamma+3\epsilon\gamma$. Finally,
  $D_1\cup D_2=(D_1\cup D_2)\setminus D\cup ((D_1\cup D_2)\cap D)$ and
  with the definition of $\epsilon$ we conclude
  $|D_1\cup D_2|<4\gamma + 4\e\gamma$.
\end{proof}

The analysis of the next and final step of the algorithm will actually
show that the worst case is obtained when $\e=0$.

We now update the residual degrees, that is, we update $R$ as
$V(G)\setminus N[D_1\cup D_2]$ and for every vertex the number
$\delta_R(v)=N(v)\cap R$ accordingly.


\section{Greedy domination in planar graphs of maximum
residual degree}\label{sec:greedy}

We will show next that after the first two phases of the algorithm we
are in the very nice situation where all residual degrees are
small. This will allow us to proceed in a greedy manner.

  \begin{lemma}\label{lem:res-degree}
    For all $v\in V(G)$ we have $\delta_R(v)\leq 30$.
  \end{lemma}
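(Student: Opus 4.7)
The plan is to bound $\delta_R(v)$ for each $v$ by exploiting the small dominator $A_v$ of $N_R(v)$ guaranteed after phase~1 and the phase~2 selection criterion for $B_v$.

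First I would dispose of the trivial case $v\in D_1$: then every neighbour of $v$ lies in $N[D_1]\subseteq N[D_1\cup D_2]$, so $N_R(v)=\emptyset$ and $\delta_R(v)=0$. So assume $v\notin D_1$ and let $A_v=\{v_1,v_2,v_3\}\subseteq V(G)\setminus\{v\}$ be the fixed set from \cref{sec:step1} with $N_{R_0}(v)\subseteq N[A_v]$, where $R_0$ is the red set after phase~1. Since after phase~2 we have $R\subseteq R_0$, we still have $N_R(v)\subseteq N[A_v]=N[v_1]\cup N[v_2]\cup N[v_3]$, so it suffices to bound $|N_R(v)\cap N[v_i]|$ by $10$ for each $i$.

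Fix $v_i\in A_v$. Because $A_v\subseteq V(G)\setminus\{v\}$ we have $v_i\neq v$, so $N_R(v)\cap N[v_i]$ decomposes as
\[
N_R(v)\cap N[v_i] \;=\; \bigl(N_R(v)\cap N(v_i)\bigr)\;\cup\;\bigl(N_R(v)\cap\{v_i\}\bigr),
\]
and every vertex of $N_R(v)\cap N(v_i)$ is red and adjacent to $v_i$, hence lies in $N_R(v_i)$. Therefore $N_R(v)\cap N(v_i)\subseteq N_R(v)\cap N_R(v_i)$. Now I split on whether $v_i\in B_v$. If $v_i\in B_v$, then $v_i\in D_2$ by the definition of $D_2$, so $N[v_i]\subseteq N[D_1\cup D_2]$ and $N_R(v)\cap N[v_i]=\emptyset$. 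If $v_i\notin B_v$, then by the very definition of $B_v$ we have $|N_R(v)\cap N_R(v_i)|\le 9$, which together with the single extra vertex $v_i$ itself gives $|N_R(v)\cap N[v_i]|\le 9+1=10$.

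Summing over the at most three elements of $A_v$ yields $\delta_R(v)=|N_R(v)|\le 3\cdot 10=30$, as required. I do not expect any real obstacle here; the only care needed is the bookkeeping that $v_i$ may itself belong to $N_R(v)$ (producing the $+1$ that forces the bound to be $10$ rather than $9$ per group, and ultimately the constant $30$ rather than $27$), and the observation that the phase~2 update only shrinks $R$, so the cover $A_v$ computed relative to the old residual set still covers the new $N_R(v)$.
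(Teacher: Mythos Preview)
Your argument is correct and follows the same idea as the paper's proof: pigeonhole over the three-element cover $A_v$ combined with the threshold $10$ in the definition of $B_v$. The only cosmetic difference is that the paper argues by contradiction (assuming $\delta_R(v)\ge 31$, finding one $z\in A_v$ with $|N_R(v)\cap N_R(z)|\ge 10$, and concluding $v\in D_2$), whereas you bound each of the three pieces $|N_R(v)\cap N[v_i]|\le 10$ directly; the underlying content, including the ``$+1$'' bookkeeping for $v_i$ itself and the monotonicity $R\subseteq R_0$, is identical.
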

  \begin{proof}
    First, every vertex of $D_1\cup D_2$ has residual degree $0$.
    Assume that there is a vertex $v$ of residual degree at least $31$.
    As $v$ is not in $D_1$, its $31$ non-dominated neighbors are
    dominated by a set $A_v$ of at most 3 vertices. Hence there is a
    vertex~$z$ (not in $D_1$ nor $D_2$) with $|N_R(v)\cap N_R[z]|\geq
    \lceil 31/3\rceil = 11$, hence, $|N_R(v)\cap N_R(z)|\geq 10$.
	This contradicts that $v$ is not in~$D_2$.
  \end{proof}

In the light of \cref{lem:res-degree}, we could now simply choose
$D_3$ as the set of elements not in $N[D_1\cup D_2]$.  We would get a
constant factor approximation, but not a very good one. Instead, we
now start to simulate the classical greedy algorithm, which in each
round selects a vertex of maximum residual degree. Here, we let all
non-dominated vertices that have a neighbor of maximum residual degree
choose such a neighbor as its dominator (or if they have maximum
residual degree themselves, they may choose themselves). In general
this is not possible for a LOCAL algorithm, however, as we established
a bound on the maximum degree we can proceed as follows.  We let
$i=30$. Every red vertex that has at least one neighbor of residual
degree $30$ arbitrarily picks one of them and elects it to the
dominating set. Then every vertex recomputes its residual degree and
$i$ is set to $29$. We continue until $i$ reaches $0$ when all
vertices are dominated. More formally, we define several sets as
follows.

\begin{tcolorbox}[colback=red!5!white,colframe=red!50!black]
  For $30\geq i\geq 0$,  for every $v\in R$ in parallel:\\[2mm]
  if there is some $u$ with $\dd_R(u)=i$ and ($\{u,v\}\in E(G)$ or $u=v$), then\\
  \mbox{ } $\dom_i(v)\coloneqq \{u\}$ (pick one such $u$ arbitrarily),\\
  \mbox{ } $\dom_i(v)\coloneqq \emptyset$ otherwise.
  \begin{itemize}
    \item $R_i \coloneqq R$ \hfill \textit{\small What currently remains to be dominated}
    \item $\Delta_i \coloneqq \bigcup\limits_{v\in R} \dom_i(v)$ \hfill \textit{\small What we pick in this step}
    \item $R \coloneqq R \setminus N[\Delta_{i}]$ \hfill \textit{\small Update red vertices}
  \end{itemize}
  Finally, $D_3:=  \bigcup\limits_{1\le i\le 30} \Delta_i$.
\end{tcolorbox}

\smallskip
Let us first prove that the algorithm in fact computes a dominating set.
\begin{lemma}\label{lem:correctness}
  When the algorithm has finished the iteration with parameter
  $i\geq 1$, then all vertices have residual degree at most $i-1$.
\end{lemma}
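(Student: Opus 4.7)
My plan is to prove the statement by downward induction on $i$ from $i=30$ to $i=1$, maintaining the invariant that at the start of iteration $i$ every vertex has residual degree at most $i$. The base case (start of iteration $30$) is exactly \cref{lem:res-degree}; for $i<30$ the invariant at the start of iteration $i$ is simply the conclusion of the previous iteration (with parameter $i+1$). Hence it suffices to prove the single-step implication: if all residual degrees are at most $i$ before iteration $i$ runs, then they are all at most $i-1$ after it ends.

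For the single-step argument, fix a vertex $u$ and let $\delta_R(u)$ denote its residual degree at the start of the iteration, computed with respect to $R_i$. Iteration $i$ only shrinks $R$ (it removes exactly the vertices of $N[\Delta_i]$), so residual degrees can only decrease, and the only interesting case is $\delta_R(u)=i$. In that case $u$ has at least one neighbor $x\in R_i$ (since $i\geq 1$). Because $x$ is red and sees the neighbor $u$ of residual degree exactly $i$, the selection rule forces $\dom_i(x)\neq\emptyset$, so $x$ picks some $u'\in N[x]$ with $\delta_R(u')=i$. Then $u'\in\Delta_i$ and $x\in N[u']\subseteq N[\Delta_i]$, so $x$ is removed from $R$ at the end of the iteration. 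Thus $u$ loses at least one red neighbor and its residual degree drops to at most $i-1$.

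I do not expect a real obstacle here: the argument is purely combinatorial, relying only on the monotonicity of $R$ together with the observation that every vertex realising the current maximum residual degree has at least one red neighbor, which is then forced by the selection rule to pick a dominator of residual degree $i$ in the same round. No planarity or density argument enters the proof at this stage; planarity has already done its work in establishing the bound of $30$ on residual degrees via \cref{lem:res-degree}.
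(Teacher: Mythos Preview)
Your proof is correct and follows essentially the same inductive scheme as the paper: both argue that any vertex $u$ with residual degree exactly $i$ at the start of round $i$ has all its red neighbors $x$ witnessing $u$ as a candidate of degree $i$, so each such $x$ selects a dominator and leaves~$R$. The only minor difference is that the paper observes the same reasoning applies to \emph{every} red neighbor of $u$ simultaneously, so $u$'s residual degree in fact drops all the way to $0$ rather than merely to $i-1$; your weaker conclusion already suffices for the lemma.
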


In particular, after finishing the iteration with parameter $1$, there
is no vertex with residual degree $1$ left and in the final round all
non-dominated vertices choose themselves into the dominating
set. Hence, the algorithm computes a dominating set of $G$.

\begin{proof}
  By induction, before the iteration with parameter $i$, all vertices
  have residual at most $i$. Assume $v$ has residual degree $i$ before
  the iteration with parameter $i$.  In that iteration, all
  non-dominated neighbors of $v$ choose a dominator (possibly $v$, then
  the statement is trivial),
  hence, are removed from $R$. It follows that the residual degree of $v$ after
  the iteration is $0$. Hence, after this iteration and before the
  iteration with parameter $i-1$, we are left with vertices of
  residual degree at most $i-1$.
\end{proof}

We now analyze the sizes of the sets $\Delta_i$ and $R_i$. The first
lemma follows from the fact that every vertex chooses at most one
dominator.

\begin{lemma}\label{lem:total-h}
  For every $i\le 30$, $\sum\limits_{j\le i}|\Delta_j| \le |R_i|$.
\end{lemma}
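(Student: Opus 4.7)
The plan is to exhibit an injection $\phi : \bigcup_{j\le i}\Delta_j \to R_i$; combined with the fact that the $\Delta_j$'s are pairwise disjoint, this gives $\sum_{j\le i}|\Delta_j| = |\bigcup_{j\le i}\Delta_j| \le |R_i|$. Pairwise disjointness is easy: once $u$ is placed into $\Delta_j$, the update $R \leftarrow R \setminus N[\Delta_j]$ removes $u$ together with all of $N(u)\cap R_j$ from $R$, so in every later iteration (one with strictly smaller parameter) $u\notin R$ and $\dd_R(u)=0$, which rules out both neighbor-selection and self-selection of $u$.

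To define $\phi$, for each $u\in\Delta_j$ I would fix any witness $v\in R_j$ with $\dom_j(v)=\{u\}$ (such a $v$ exists by the very definition of $\Delta_j$) and set $\phi(u):=v$. Since the iterations run $30,29,\ldots,0$ and $R$ only ever shrinks, $R_j\subseteq R_i$ for every $j\le i$, and therefore $\phi(u)\in R_i$ as required.

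The key step is injectivity of $\phi$. Suppose $\phi(u_1)=\phi(u_2)=v$ with $u_1\ne u_2$ and $u_k\in\Delta_{j_k}$. If $j_1=j_2$, then $\dom_{j_1}(v)$ would have to contain the two distinct vertices $u_1$ and $u_2$, contradicting that it is defined to be a singleton. Otherwise, without loss of generality $j_1>j_2$; the condition $\dom_{j_1}(v)=\{u_1\}$ forces $v\in N[u_1]\subseteq N[\Delta_{j_1}]$, so $v$ is deleted from $R$ at the end of iteration $j_1$ and in particular $v\notin R_{j_2}$, contradicting $\dom_{j_2}(v)=\{u_2\}$. The only subtlety is remembering that the iteration order is reversed, so ``$j\le i$'' means ``at or after iteration $i$''; with this in mind the whole argument fits in a few lines.
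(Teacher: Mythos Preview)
Your argument is correct and is essentially the paper's one-line observation, spelled out in the reverse direction. The paper simply notes that every vertex of $R_i$ selects at most one dominator across all rounds $j\le i$ (once it selects, it leaves $R$), and every element of $\Delta_j$ is selected by some vertex of $R_j\subseteq R_i$; hence $\sum_{j\le i}|\Delta_j|$ is bounded by the number of selections, which is at most $|R_i|$. Your injection $\phi$ is exactly the inverse of this selection map, and your disjointness check is the contrapositive of ``a vertex selects at most once''. The paper's phrasing is marginally more economical because bounding each $|\Delta_j|$ by the number of selections in round~$j$ makes a separate disjointness argument unnecessary, but the content is the same.
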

\begin{proof}
  The vertices of $R_i$ are those that remain to be dominated in the
  last $i$ rounds of the algorithm. As every vertex that remains to be
  dominated chooses at most one dominator in one of the rounds
  $j\leq i$, the statement follows.
\end{proof}

\pagebreak
As the vertices of $D$ that still dominate non-dominated vertices also
have bounded residual degree, we can conclude that not too many
vertices remain to be dominated.
\begin{lemma}\label{lem:h1}
  For every $i\le 30$, $|R_i| \le (i+1)(1-\e)\gamma$.
\end{lemma}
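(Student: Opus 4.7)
The plan is to cover $R_i$ by closed neighborhoods of a carefully chosen subset of $D$, and then invoke the residual-degree bound that holds when $R_i$ is recorded.

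First I would set $D':=D\setminus(D_1\cup D_2)$, so that $|D'|=(1-\e)\gamma$ by definition of~$\e$. The crucial observation is that any $u\in D\cap(D_1\cup D_2)$ lies in $D_1\cup D_2$, so $u$ and all of its neighbors belong to $N[D_1\cup D_2]$; consequently $N[u]\cap R_i=\emptyset$. Since $D$ dominates~$G$ we have $R_i\subseteq N[D]$, and the previous observation trims the cover to
\[
  R_i\;\subseteq\;\bigcup_{u\in D'}\bigl(N[u]\cap R_i\bigr).
\]

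Next, I would note that at the start of iteration~$i$ every vertex has residual degree at most~$i$: for $i=30$ this is \cref{lem:res-degree}, and for $i<30$ it follows from \cref{lem:correctness} applied to iteration $i+1$, which has just finished. Hence for every $u\in D'$,
\[
  |N[u]\cap R_i|\le i+1 \text{ when } u\in R_i, \qquad |N[u]\cap R_i|\le i \text{ when } u\notin R_i.
\]
Writing $a:=|D'\cap R_i|$ and summing over $u\in D'$ would then yield $|R_i|\le (i+1)a+i(|D'|-a)=i(1-\e)\gamma+a$, and since $a\le|D'|=(1-\e)\gamma$ the desired bound $(i+1)(1-\e)\gamma$ drops out.

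The step that I expect to carry the real conceptual load is the very first one: one must notice that vertices of $D$ already absorbed into $D_1\cup D_2$ contribute nothing to the cover of $R_i$; without this reduction the naive bound is only $|R_i|\le i\gamma+(1-\e)\gamma$, which is strictly weaker than the claimed estimate whenever $\e>0$. Everything after that reduction is routine book-keeping on residual degrees.
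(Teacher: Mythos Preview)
Your argument is correct and is essentially the paper's own proof: both observe that $D'=D\setminus(D_1\cup D_2)$ (the paper even removes $\bigcup_{j>i}\Delta_j$ as well, but this does not improve the size bound) still dominates $R_i$ because vertices of $D$ already in $D_1\cup D_2$ have no red neighbors, and then use the residual-degree bound from \cref{lem:res-degree} and \cref{lem:correctness} to cap each $|N[u]\cap R_i|$ by $i+1$. Your case split on whether $u\in R_i$ is a harmless refinement that the paper skips, since the cruder estimate $|N[u]\cap R_i|\le i+1$ for all $u\in D'$ already gives $(i+1)(1-\e)\gamma$.
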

\begin{proof}
  First note that for every $i$,
  $D\setminus (D_1\cup D_2\cup \bigcup_{j>i}\Delta_j)$ is a dominating
  set for $R_i$; additionally each vertex in this set has residual
  degree at most $i$.  And finally, this set is a subset of
  $D\setminus (D_1\cup D_2)$. Hence by the definition of $\e$, we get
  that
  $|D\setminus (D_1\cup D_2\cup \bigcup_{j>i}\Delta_j)|\le
  (1-\e)\gamma$. As every vertex dominates its residual neighbors and
  itself, we conclude $|R_i|\le (i+1)(1-\e)\gamma$.
\end{proof}

The next lemma shows that we cannot pick too many vertices of high
residual degree. This follows from the fact that planar graphs have
bounded edge density.

\begin{lemma}\label{lem:delta}
  For every $7\le i\le 30$, $|\Delta_i| \le \frac{3|R_i|}{i-6}$.
\end{lemma}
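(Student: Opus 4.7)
The plan is to combine the planar edge density bound with a careful double-counting of the edges from $\Delta_i$ into $R_i$. First I would set up $G' \coloneqq G[\Delta_i \cup R_i]$, a planar subgraph of $G$, and split $\Delta_i$ into its ``yellow'' part $A_1 \coloneqq \Delta_i \setminus R_i$ and its ``red'' part $A_2 \coloneqq \Delta_i \cap R_i$; writing $B \coloneqq R_i \setminus \Delta_i$, this gives a partition $V(G') = A_1 \sqcup A_2 \sqcup B$ with $|A_1|+|A_2|=|\Delta_i|$ and $|A_2|+|B|=|R_i|$.

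Every $u \in \Delta_i$ has $|N(u) \cap R_i| = i$ by construction, so summing and then decomposing by edge type across the partition yields
\[
  i\,|\Delta_i| \;=\; e(A_1,A_2)+e(A_1,B)+2\,e(A_2,A_2)+e(A_2,B),
\]
where $e(X,Y)$ counts edges between $X$ and $Y$ for $X \neq Y$ and edges inside $X$ for $X = Y$. The coefficient $2$ on the $A_2$--$A_2$ term appears because edges inside $A_2 \subseteq \Delta_i \cap R_i$ are counted once at each endpoint. Rewriting the right-hand side using $|E(G')| = e(A_1,A_1) + e(A_1,A_2) + e(A_1,B) + e(A_2,A_2) + e(A_2,B) + e(B,B)$ then gives
\[
  i\,|\Delta_i| \;=\; |E(G')| - e(A_1,A_1) - e(B,B) + e(A_2,A_2) \;\le\; |E(G')| + e(A_2,A_2).
\]

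I would then apply the planar density bound in two places: to $G'$ itself, giving $|E(G')| < 3|V(G')| = 3(|\Delta_i| + |R_i| - |A_2|)$, and to $G[A_2]$, giving $e(A_2,A_2) < 3|A_2|$. The two $|A_2|$ contributions cancel, so the sum collapses to $3|\Delta_i| + 3|R_i|$, which rearranges to $(i-3)\,|\Delta_i| < 3|R_i|$, hence $|\Delta_i| < \tfrac{3|R_i|}{i-3} \le \tfrac{3|R_i|}{i-6}$ for every $i \ge 7$.

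The main obstacle is spotting this decomposition. The obvious route---summing $\deg_{G'}(u)$ over $u \in \Delta_i$ and invoking $\sum \deg = 2|E|$---only gives $i\,|\Delta_i| \le 2|E(G')| < 6|V(G')|$ and hence the weaker bound $|\Delta_i| < \tfrac{6|R_i|}{i-6}$. Tracking edge-endpoint incidences from $\Delta_i$ into $R_i$ directly (rather than through full degrees in $G'$) is what saves the factor of two, and the clean cancellation of $|A_2|$ between the two planarity bounds is what produces the constant $3$ in the numerator.
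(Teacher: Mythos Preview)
Your argument is correct and follows essentially the same strategy as the paper: apply the planar edge-density bound twice, once to the large induced subgraph containing $\Delta_i$ and its red neighbours, and once to a small subgraph inside $\Delta_i$, then combine. The paper takes $J=G[\Delta_i]$ and $K=G[\Delta_i\cup(N[\Delta_i]\cap R_i)]$, bounds $|E(J)|<3|\Delta_i|$ and $|V(K)|\le |\Delta_i|+|R_i|$, and from $|E(K)|\ge i|\Delta_i|-|E(J)|$ together with $|E(K)|<3|V(K)|$ arrives at $(i-6)|\Delta_i|<3|R_i|$.

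Your decomposition is slightly different and in fact tighter: by tracking the overlap $A_2=\Delta_i\cap R_i$ explicitly and applying planarity to $G'=G[\Delta_i\cup R_i]$ and to $G[A_2]$, the two $|A_2|$ terms cancel and you obtain $(i-3)|\Delta_i|<3|R_i|$ rather than $(i-6)|\Delta_i|<3|R_i|$. The paper's version loses a factor because it bounds $|V(K)|\le|\Delta_i|+|R_i|$ without subtracting the overlap. So your route not only proves the stated lemma but shows the sharper inequality $|\Delta_i|<\tfrac{3|R_i|}{i-3}$, valid already for $i\ge 4$. (One trivial caveat: the strict inequality $e(A_2,A_2)<3|A_2|$ degenerates when $A_2=\emptyset$, but then the extra term vanishes and the conclusion is immediate anyway.)
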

\begin{proof}
  Let $7\le i\le 30$ be an integer. We bound the size of $\Delta_i$
  by a counting argument, using that $G$ (as well as each of its
  subgraphs) is planar, and can therefore not have to many edges.

  Let $J := G[\Delta_i]$ be the subgraph of $G$ induced by the
  vertices of $\Delta_i$, which all have residual degree~$i$. Let
  $K := G[\Delta_i \cup (N[\Delta_i]\cap R_i)]$ be the subgraph of $G$
  induced by the vertices of $\Delta_i$ together with the red
  neighbors that these vertices dominate.

  As $J$ is planar, $|E(J)| < 3|V(J)| = 3|\Delta_i|$. As every
  vertex of $J$ has residual degree exactly $i$, we get
  $|E(K)| \geq i\Delta_i - |E(J)| > (i-3)|\Delta_i|$ (we have to
  subtract $|E(J)|$ to not count twice the edges of $K$ that are
  between two vertices of $J$).  We also have that
  $|V(K)| \le |V(J)| + |R_i|$. We finally apply Euler's formula again
  to $K$ and get that $|E_K| < 3|V_K|$ hence
  $(i-3)|\Delta_i| < 3|\Delta_i| + 3|R_i|$. Therefore
  $|\Delta_i|< \frac{3|R_i|}{i-6}$.
\end{proof}

Finally, we can give a lower bound on how many elements are newly
dominated by the chosen elements of high residual degree.

\begin{lemma}\label{lem:h2}
  For every $1\leq i\leq 29$, $|R_i| \le |R_{i+1}| - \frac{(i-5)|\Delta_{i+1}|}{3}$.
\end{lemma}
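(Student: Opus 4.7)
The plan is to follow the same edge-counting template as in \cref{lem:delta}, but now I want to lower-bound the number of red vertices \emph{removed} during iteration $i+1$ rather than upper-bound $|\Delta_{i+1}|$. The key identity is
\[
|R_{i+1}| - |R_i| \;=\; |R_{i+1} \cap N[\Delta_{i+1}]|,
\]
which holds because the algorithm updates $R$ by $R := R \setminus N[\Delta_{i+1}]$ in iteration $i+1$. So writing $T \coloneqq R_{i+1}\cap N[\Delta_{i+1}]$, it suffices to prove $|T| \geq \frac{(i-5)|\Delta_{i+1}|}{3}$. Note the claim is trivial when $i\le 5$ (the right-hand side is non-positive and $|R_i|\le|R_{i+1}|$ anyway), so I focus on $i\ge 6$.

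To produce the lower bound, I would form the subgraph $K \coloneqq G[\Delta_{i+1} \cup T]$, which has $|V(K)| \leq |\Delta_{i+1}| + |T|$, and apply planarity to get $|E(K)| < 3(|\Delta_{i+1}| + |T|)$. For a lower bound on $|E(K)|$, every $v \in \Delta_{i+1}$ has residual degree exactly $i+1$, i.e.\ exactly $i+1$ neighbors in $R_{i+1}$; since any neighbor of a vertex of $\Delta_{i+1}$ that lies in $R_{i+1}$ automatically lies in $T$, each such edge is an edge of $K$. Summing over $\Delta_{i+1}$ gives $(i+1)|\Delta_{i+1}|$ as a count of (vertex, edge) incidences.

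The main bookkeeping point, and the only mildly subtle step, is the double-counting: an edge of $G$ with both endpoints in $\Delta_{i+1}\cap R_{i+1}$ contributes to the count twice, while every other edge contributing to the sum is counted at most once. The overcount is bounded by $|E(G[\Delta_{i+1}])|$, which by planarity is strictly less than $3|\Delta_{i+1}|$. Hence
\[
|E(K)| \;\ge\; (i+1)|\Delta_{i+1}| - 3|\Delta_{i+1}| \;=\; (i-2)|\Delta_{i+1}|.
\]

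Combining the upper and lower bounds on $|E(K)|$ gives $(i-2)|\Delta_{i+1}| < 3|\Delta_{i+1}| + 3|T|$, which rearranges to $|T| > \frac{(i-5)|\Delta_{i+1}|}{3}$. Substituting back into the identity $|R_i| = |R_{i+1}| - |T|$ yields the desired inequality. I expect the only real obstacle to be writing the double-counting argument cleanly so that it is transparent which edges of $K$ receive multiplicity two in the sum $\sum_{v\in\Delta_{i+1}}|N(v)\cap R_{i+1}|$; everything else is a direct reuse of the planarity budget $|E|<3|V|$ exactly as in \cref{lem:delta}.
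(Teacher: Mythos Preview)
Your proposal is correct and follows essentially the same approach as the paper: define $K=G[\Delta_{i+1}\cup(N[\Delta_{i+1}]\cap R_{i+1})]$, lower-bound $|E(K)|$ by $(i+1)|\Delta_{i+1}|-|E(G[\Delta_{i+1}])|>(i-2)|\Delta_{i+1}|$, upper-bound it by $3|V(K)|$ via planarity, and conclude that $|N[\Delta_{i+1}]\cap R_{i+1}|\ge\frac{(i-5)|\Delta_{i+1}|}{3}$ before using $R_i=R_{i+1}\setminus N[\Delta_{i+1}]$. Your write-up is in fact slightly more careful than the paper's, since you explicitly isolate the trivial case $i\le 5$ and spell out which edges are double-counted.
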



\begin{proof}
  Similarly to the proof of \cref{lem:delta} (by replacing $i$ by
  $i+1$), we define $J := G[\Delta_{i+1}]$ and
  $K:= G[\Delta_{i+1} \cup (N[\Delta_{i+1}]\cap R_{i+1})]$.

  We then
  replace the bound $|V(K)| \le |V(J)| + |R_{i+1}|$ by
  $|V(K)| \le |V(J)| + |N[\Delta_{i+1}]\cap R_{i+1}|$.

  We then get:
  \[|E_K| \leq 3 |V_K|, \]
  \[(i+1)|\Delta_{i+1}| - 3|\Delta_{i+1}| \leq 3(|\Delta_{i+1}| +
    |N[\Delta_{i+1}]\cap R_{i+1}|)\text{, and}\]
  \[ |N[\Delta_{i+1}]\cap R_{i+1}| \geq \frac{(i+1-6)|\Delta_{i+1}|}{3}.\]

  Now, as $R_i = R_{i+1} \setminus N[\Delta_{i+1}]$, we have
  $|R_i| \le |R_{i+1}| - |N[\Delta_{i+1}]\cap R_{i+1}|\leq |R_{i+1}| -
  \frac{(i+1-6)|\Delta_{i+1}|}{3}$.
\end{proof}

We now formulate (and present in \cref{sec:linear-prog}) a linear program to
maximize $|D_3|$ under these constraints. As a result we conclude the
following lemma.

\begin{lemma}\label{lem:size-D3}
  $|D_3|\le 15.9(1-\e)\gamma$.
\end{lemma}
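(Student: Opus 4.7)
The plan is to assemble the constraints established in \cref{lem:total-h,lem:h1,lem:delta,lem:h2} into a single linear program whose objective, $\sum_{i=1}^{30}|\Delta_i|$, is an upper bound on $|D_3|$, and then to show that the optimum of this LP is at most $15.9(1-\e)\gamma$. First I would normalize by dividing every inequality by $(1-\e)\gamma$, introducing variables $d_i \coloneqq |\Delta_i|/((1-\e)\gamma)$ and $r_i \coloneqq |R_i|/((1-\e)\gamma)$ for $1 \leq i \leq 30$. This scaling eliminates both $\e$ and $\gamma$ from the constraints, so that the final bound will indeed take the form $c\cdot(1-\e)\gamma$ for a numerical constant $c$.

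Second, I would write the LP explicitly as
\begin{align*}
\text{maximize } & \sum_{i=1}^{30} d_i \\
\text{subject to } & \sum_{j=1}^{i} d_j \leq r_i, && 1 \leq i \leq 30,\\
& r_i \leq i+1, && 1 \leq i \leq 30,\\
& (i-6)\,d_i \leq 3 r_i, && 7 \leq i \leq 30,\\
& 3 r_i + (i-5)\,d_{i+1} \leq 3 r_{i+1}, && 1 \leq i \leq 29,\\
& d_i,\,r_i \geq 0.
\end{align*}
Any realization of the algorithm yields a feasible point of this LP with objective equal to $|D_3|/((1-\e)\gamma)$, so an upper bound on the LP value translates to the desired bound on $|D_3|$.

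Third, I would solve the LP. For the actual argument I would defer the numerical verification to \cref{sec:linear-prog}, where a primal optimal solution can be exhibited and its value, at most $15.9$, checked against a matching dual feasible solution (LP duality giving the upper bound). Heuristically, at the optimum the density constraint from \cref{lem:delta} and the recursion from \cref{lem:h2} will be tight in the regime of large $i$, while the cap $r_i \leq i+1$ from \cref{lem:h1} will bind for small $i$; these identifications reduce the LP to a small linear recurrence that can then be summed.

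The main obstacle is not conceptual but computational: there is no particularly clean closed form, so the number $15.9$ must be justified by an explicit certificate. I would therefore present the primal/dual pair produced by an LP solver and verify feasibility and optimality by direct substitution, which is routine once the explicit numbers are in hand.
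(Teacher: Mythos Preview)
Your proposal is correct and matches the paper's own argument essentially verbatim: the paper, too, normalizes by $(1-\e)\gamma$, introduces $r_i$ and $d_i$, transcribes the constraints of \cref{lem:total-h,lem:h1,lem:delta,lem:h2} into an LP, and defers the numerical resolution to \cref{sec:linear-prog}. Your explicit mention of a dual certificate is a small addition the paper does not spell out (it simply reports the solver output), but the approach is the same.
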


\section{Summarizing the planar case}

We already noted that the definition of $D_3$ implies that
$D_1\cup D_2\cup D_3$ is a dominating set of $G$. We now conclude the
analysis of the size of this computed set.  First, by
\cref{lem:size-D12} we have $|D_1\cup D_2|<4\gamma+4\e\gamma$.  Then,
by \cref{lem:size-D3} we have $|D_3|\le 15.9(1-\e)\gamma$.
Therefore $|D_1 \cup D_2 \cup D_3| < 19.9\gamma -11.9\e\gamma$.
As $\e\in[0,1]$, this is maximized when $\e=0$. Hence
\mbox{$|D_1 \cup D_2 \cup D_3|< 19.9 \gamma$}.

\begin{theorem}\label{thm:planar}
  There exists a distributed LOCAL algorithm that, for every planar
  graph $G$, computes in a constant number of rounds a dominating set
  of size at most $20\gamma(G)$.
\end{theorem}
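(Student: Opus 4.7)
The plan is to combine the three phases of the algorithm into a single statement by assembling the bounds already established in \cref{sec:step1,sec:D2,sec:greedy}. First I would observe that each of the three phases runs in a constant number of rounds: the set $D_1$ is determined by looking at distance-$2$ neighborhoods, the sets $B_v$ and hence $D_2$ require another two rounds once $D_1$ is known, and the third phase is a fixed loop of $30$ iterations each of which needs only local information (residual degrees of neighbors). Adding these up shows the overall round complexity is an absolute constant, independent of $|V(G)|$.

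Next I would verify that the algorithm indeed outputs a dominating set. Vertices selected into $D_1$ or $D_2$ dominate their closed neighborhoods by construction, and \cref{lem:correctness} ensures that after the loop in phase three, every remaining red vertex eventually selects a dominator (possibly itself when $i=0$), so $D_1 \cup D_2 \cup D_3 \subseteq V(G)$ with $N[D_1 \cup D_2 \cup D_3] = V(G)$.

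The heart of the argument is the size bound, where I would simply stitch together \cref{lem:size-D12} and \cref{lem:size-D3}. By \cref{lem:size-D12} we have $|D_1 \cup D_2| < 4\gamma + 4\e\gamma$, where $\e \in [0,1]$ is defined by $|(D_1 \cup D_2) \cap D| = \e\gamma$. By \cref{lem:size-D3} the greedy phase contributes $|D_3| \le 15.9(1-\e)\gamma$. Summing,
\[
|D_1 \cup D_2 \cup D_3| < 4\gamma + 4\e\gamma + 15.9(1-\e)\gamma = 19.9\gamma - 11.9\e\gamma.
\]
Since the coefficient of $\e$ is negative, this expression is maximized at $\e = 0$, giving $|D_1 \cup D_2 \cup D_3| < 19.9\gamma \le 20\gamma(G)$, which is exactly the claimed approximation factor.

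There is no real obstacle remaining; the only delicate point is the monotonicity in $\e$. It is reassuring that the worst case arises precisely when $D_1 \cup D_2$ contains no vertex of the optimum $D$, matching the earlier intuition that picking vertices from $D$ is only beneficial for the approximation ratio. Once this is noted, the theorem follows immediately from the lemmas of the previous sections.
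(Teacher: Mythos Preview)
Your proposal is correct and follows essentially the same approach as the paper: you combine \cref{lem:size-D12} and \cref{lem:size-D3} to obtain $|D_1\cup D_2\cup D_3|<19.9\gamma-11.9\e\gamma$, observe this is maximized at $\e=0$, and conclude the $20$-approximation. The paper's own summary does exactly this, though it spends fewer words on the round complexity and correctness (which are discussed earlier in the text rather than recapitulated in the proof).
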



\section{Restricted classes of planar graphs}
In this section we further restrict the input graphs, requiring e.g.\ planarity
together with a lower bound on the girth. Our algorithm works exactly as before, however,
using different parameters. From the different edge densities and
chromatic numbers of
the restricted graphs we will then derive different constants and
as a result a better approximation factor. Throughout this section
we use the same notation as in the first part of the paper and state
in the adapted lemmas with the same numbers as in the first part
the adapted sizes of the respective sets.

As in the general case in the first phase we begin by computing
the set $D_1$ and analyzing it in terms of the auxiliary set $\hat{D}$.


\setcounter{lemma}{2}

\begin{acor}\label{a-lenzen-improved-planar}\hspace{-1.7mm}\textbf{.}

  \vspace{-6mm}
  \textit{\begin{enumerate}
    \item If $G$ is bipartite, then $|\hat{D}\setminus D| < 2\gamma$.\smallskip
    \item If $G$ is triangle-free, outerplanar, or has girth 5,
      then $|\hat{D}\setminus D| < 3\gamma$.
  \end{enumerate}}
\end{acor}
\begin{proof}
This is immediate from \cref{lem:densities} and \cref{lenzen-improved}.
\end{proof}

The inclusion $D_1\subseteq \hat D$ continues to hold and the bound
on the sizes as stated in the next lemma is again a direct consequence of the corollary.

\smallskip
\begin{alemma}\label{alem:D-hat}\hspace{-1.7mm}\textbf{.}
    We have $D_1\subseteq \hat D$, and
  \textit{\begin{enumerate}
    \item if $G$ is bipartite,
      then $|\hat{D}\setminus D| < 2\gamma$ and $|\hat{D}|< 3\gamma$.\smallskip
    \item if $G$ is triangle-free, outerplanar, or has girth 5,
      then $|\hat{D}\setminus D| < 3\gamma$ and $|\hat{D}|< 4\gamma$.
  \end{enumerate}}
\end{alemma}

In case of triangle-free planar graphs (in particular in the case of bipartite
planar graphs) we proceed with the second phase exactly as in the second phase of
the general algorithm (\cref{sec:D2}), however, the parameter 10 is replaced by
the parameter $7$. In  case of planar graphs of girth at least five or outerplanar
graphs, we simply set $D_2=\emptyset$.

\begin{tcolorbox}[colback=red!5!white,colframe=red!50!black]
  If $G$ is triangle-free:

  \begin{itemize}
    \item For $v\in V(G)$ let $B_v\coloneqq \{z\in V(G)\setminus
      \{v\}: |N_R(v)\cap N_R(z)|\geq 7\}$.\smallskip
    \item Let $W$ be the set of vertices $v\in V(G)$ such
      that $B_v \neq \emptyset$.\smallskip
    \item Let $D_2\coloneqq \bigcup\limits_{v\in W} (\{v\}\cup B_v)$.
  \end{itemize}

  If $G$ has girth at least $5$ or $G$ is outerplanar, let $D_2=\emptyset$.
\end{tcolorbox}

\cref{lem:WcapD1} is based only on the definition of $B_v$ and $W$ and
does not use particular properties of planar graphs, hence, it also holds
in the restricted case and we recall it for convenience.

\begin{lemma}\label{alem:WcapD1}
  We have $W\cap D_1=\emptyset$ and for every $v\in V(G)$ we have
  $B_v\cap D_1=\emptyset$.
\end{lemma}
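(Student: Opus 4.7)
The plan is to observe that the proof of \cref{lem:WcapD1} carries over essentially verbatim to this adapted setting, since the original argument never invoked the specific numerical threshold (which was $10$ in the planar case and is now $7$, or irrelevant because $D_2=\emptyset$); it relied only on the fact that a nonempty intersection $N_R(v) \cap N_R(z)$ is required in order for $z$ to belong to $B_v$.

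First I would recall that after phase~1 the red vertices are defined as $R = V(G) \setminus N[D_1]$. Hence for any $v \in D_1$, both $v$ itself and every neighbor of $v$ lie in $N[D_1]$, so $N_R(v) = \emptyset$. Consequently $N_R(v) \cap N_R(z) = \emptyset$ for every $z \in V(G)$ whenever $v \in D_1$, and by symmetry of set intersection the same holds whenever $z \in D_1$.

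In the triangle-free case, the defining condition $|N_R(v) \cap N_R(z)| \geq 7$ cannot be met when the intersection is empty, so $v \in D_1$ forces $B_v = \emptyset$ and thus $v \notin W$. Using the obvious symmetry $z \in B_v \iff v \in B_z$ (which follows directly from the symmetry of the defining inequality), if $z \in D_1$ then $z \notin B_u$ for any $u \in V(G)$, giving the second part of the statement. In the girth-$5$ and outerplanar cases, $D_2$ is defined to be empty by convention, so $W$ and every $B_v$ are vacuously empty and both claims are trivial.

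The main non-obstacle is simply verifying that nothing in the original argument exploited a planar-specific quantitative constant — only the qualitative fact that vertices in $D_1$ contribute no red neighbors — which is immediate by inspection. Thus no extra work beyond citing the definitions and the construction of $R$ is required.
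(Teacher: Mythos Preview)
Your proposal is correct and matches the paper's approach exactly: the paper does not give a separate proof here but simply remarks that \cref{lem:WcapD1} ``is based only on the definition of $B_v$ and $W$ and does not use particular properties of planar graphs, hence, it also holds in the restricted case,'' which is precisely the observation you spell out in detail. Your additional handling of the girth-$5$ and outerplanar cases (where $D_2=\emptyset$ and the claim is vacuous) is a harmless elaboration the paper leaves implicit.
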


The next lemma uses the triangle-free property.

\begin{alemma}\label{alem:dominating-dominators}\hspace{-1.7mm}\textbf{.}
  If $G$ is triangle-free, then for all vertices $v\in W$ we have

  \vspace{-5pt}
  \begin{itemize}
  \item $B_v \subseteq A_v$ (hence $|B_v|\le 3$), and \smallskip
  \item if $v\not\in \hat{D}$, then $B_v\subseteq D$.
  \end{itemize}
\end{alemma}
\begin{proof}
  Assume $A_v=\{v_1,v_2, v_3\}$ and assume there is $z\in V(G)\setminus \{v,v_1,v_2, v_3\}$
  with $|N_R(v) \cap N_R(z)| \geq 7$.
  As the vertices $v_1, v_2, v_3$ dominate $N_R(v)$, and hence $N_R(v)\cap N_R(z)$,
  there must be some~$v_i$, $1\leq i\leq 3$, with
  \mbox{$|N_R(v) \cap N_R(z) \cap N[v_i]| \geq \lceil 7/3\rceil \geq 3$}.
  Then on of the following holds: either
  \mbox{$|N_R(v) \cap N_R(z) \cap N(v_i)| \geq 3$},  or
  \mbox{$|N_R(v) \cap N_R(z) \cap N(v_i)| =2$}.
  The first case shows that $K_{3,3}$ is a subgraph of~$G$
  contradicting the assumption that $G$ is planar.
  The second case implies that $v_i\in N_R(v)$. In this situation, by picking
  $w \in N_R(v) \cap N_R(z) \cap N(v_i)$, we get that $(v,v_i,w)$ is a triangle,
  hence we also reach a contradiction.

  If furthermore $v\not\in \hat{D}$, by definition of $\hat{D}$,
  we can find $w_1,w_2, w_3$ from $D$
  that dominate $N(v)$, and in particular $N_R(v)$.
  If $z\in V(G)\setminus \{v,w_1,w_2, w_3\}$
  with $|N_R(v) \cap N_R(z)| \geq 7$ we can argue as above to obtain
  a contradiction.
\end{proof}

For our analysis we again split $D_2$ into three sets $D_2^1, D_2^2$ and
$D_2^3$. The next lemmas hold also for the restricted cases. We repeat
them for convenience.

\begin{alemma}\label{alem:size-D21}
  If $G$ is triangle-free, then $|D_2^1\setminus D|\leq 3\gamma$.
\end{alemma}

\begin{alemma}\label{alem:size-D22}
  If $G$ is triangle-free, then $D_2^2 \subseteq \hat D$ and therefore
  $|D_2^2\setminus D|< 3\gamma$.
\end{alemma}

\begin{alemma}\label{alem:size-D23}
  If $G$ is triangle-free, then $D_2^3\subseteq D_2^1$.
\end{alemma}

Again, for a fine analysis, we
analyze the number of vertices of $D$ that have been selected
so far and let $\e\in [0,1]$ be such that $|(D_1 \cup D_2)\cap D| =\e\gamma$.

\begin{alemma}\label{alem:size-D12}
\mbox{ }
\vspace{-1mm}
\begin{enumerate}
\item If $G$ is bipartite, then $|D_1\cup D_2| < 2\gamma+4\e\gamma$.\smallskip
\item If $G$ is triangle-free, then $|D_1\cup D_2| < 3\gamma+4\e\gamma$.\smallskip
\item If $G$ has girth at least $5$ or is outerplanar, then $|D_1\cup D_2| < 3\gamma+\e\gamma$.
\end{enumerate}
\end{alemma}
\begin{proof}
If $G$ is outerplanar or $G$ has girth at least $5$, then $D_2=\emptyset$.
By \cref{alem:D-hat} we have $D_1\subseteq \hat{D}$ and
$|\hat{D}\setminus D|<3\gamma$, hence $(D_1\cup D_2)\setminus D<3\gamma$.

If $G$ is triangle-free, by \cref{alem:size-D23} we have $D_2^3\subseteq D_2^1$, hence,
  $D_1\cup D_2=D_1\cup D_2^1\cup D_2^2$. By \cref{alem:D-hat} we have
  $D_1 \subseteq \hat D$ and by \cref{alem:size-D22} we also have
  $D_2^2 \subseteq \hat D$, hence $D_1\cup D_2^2\subseteq \hat D$.
  Again by \cref{alem:D-hat}, if $G$ is bipartite, then
  $|\hat D \setminus D|<2\gamma$, therefore $|(D_1 \cup D_2^2 )\setminus D| < 2\gamma$, and if $G$ is triangle-free,
  then $|\hat D \setminus D|<3\gamma$,
  therefore $|(D_1 \cup D_2^2 )\setminus D| < 3\gamma$.
  We have $W\cap D\subseteq D_2^1\cap D$, hence with
  \cref{alem:dominating-dominators} we conclude that
  \[
    \big\vert D_2^1 \setminus D \big\vert \leq
    \Big\vert\bigcup\limits_{v\in D \cap D_2^1}B_v\Big\vert \leq
    \sum\limits_{v\in D \cap D_2^1} |B_v| \leq 3\e\gamma,
  \]
  hence $(D_1\cup D_2)\setminus D<2\gamma+3\epsilon\gamma$ if
  $G$ is bipartite and $(D_1\cup D_2)\setminus D<3\gamma+3\epsilon\gamma$
  if $G$ is triangle-free.

  Finally,
  $D_1\cup D_2=(D_1\cup D_2)\setminus D\cup (D_1\cup D_2)\cap D$ and
  with the definition of $\epsilon$ we conclude

  \vspace{-2mm}
  \begin{enumerate}
  \item $|D_1\cup D_2|<2\gamma + 4\e\gamma$ if $G$ is bipartite, \smallskip
  \item $|D_1\cup D_2|<3\gamma + 4\e\gamma$ if $G$ is triangle-free, \smallskip
  \item $|D_1\cup D_2|<3\gamma + \e\gamma$ if $G$ has girth at least
  $5$ or is outerplanar.
  \end{enumerate}

\end{proof}

Again, we now update the residual degrees, that is, we update
$R$ as $V(G)\setminus N[D_1\cup D_2]$ and for every vertex the
number $\delta_R(v)=N(v)\cap R$ accordingly and proceed with
the third phase.

\begin{alemma}\label{alem:res-degree}\hspace{-1.7mm}\textbf{.}

\vspace{-6mm}
\textit{\begin{enumerate}
\item If $G$ is triangle-free, then for all $v\in V(G)$ we have $\delta_R(v)\leq 18$.\smallskip
\item If $G$ has girth at least $5$, then for all $v\in V(G)$ we have $\delta_R(v)\leq 3$.\smallskip
\item If $G$ is outerplanar, then for all $v\in V(G)$ we have $\delta_R(v)\leq 9$.
\end{enumerate}}
\end{alemma}
\begin{proof}
  Every vertex of $D_1\cup D_2$ has residual degree $0$, hence, we
  need to consider only vertices that are not in $D_1$ or $D_2$.

  First assume that the graph is triangle-free and
  assume that there is a vertex $v$ of residual degree at least $19$.
  As $v$ is not in $D_1$, its $19$ non-dominated
  neighbors are dominated by a
  set $A_v$ of at most~3 vertices. Hence, there is vertex $z$ (not in $D_1$
  nor $D_2$) dominating at least $\lceil 19/3\rceil = 7$ of them.
  Here, $z$ cannot be one of these 7 vertices, otherwise it would be connected
  to $v$ and there would be a triangle in the graph.
  Therefore we
  have $|N_R(v)\cap N_R(z)|\geq 7$, contradicting that $v$ is not in~$D_2$.

  Now assume that $G$ has girth at least $5$ and
  assume that there is a vertex $v$ of residual degree at least~$4$.
  As $v$ is not in $D_1$, its $4$ non-dominated
  neighbors are dominated by a
  set $A_v$ of at most~3 vertices. Hence, there is vertex $z$ (not in $D_1$
  nor $D_2$) dominating at least $\lceil 4/3\rceil = 2$ of them.
  Here, $z$ cannot be one of these 2 vertices, otherwise it would be connected
  to $v$ and there would be a triangle in the graph. However, $z$ can
  also not be any other vertex, as otherwise we find a cycle of length $4$,
  contradicting that~$G$ has girth at least $5$.

  Finally, assume that $G$ is outerplanar and
  assume that there is a vertex $v$ of residual degree at least $10$.
  As $v$ is not in $D_1$, its $10$ non-dominated
  neighbors are dominated by a
  set $A_v$ of at most~3 vertices. Hence, there is vertex $z$ (not in $D_1$
  nor $D_2$) dominating at least $\lceil 10/3\rceil = 4$ of them.
  Therefore $|N(v)\cap N(z)|\geq 3$, and we find a $K_{2,3}$ as a
  subgraph, contradicting that $G$ is outerplanar.
\end{proof}

We proceed to compute a dominating set of the remaining vertices
as before for the respective number of rounds.

\setcounter{lemma}{12} 

\begin{alemma}\label{alem:total-H}
  If $G$ is triangle-free or outerplanar,
  for every $1\le i$, $\sum\limits_{j\le i}|\Delta_j| \le |R_i|$.
\end{alemma}

\begin{alemma}\label{alem:tri-h1}
  If $G$ is triangle-free or outerplanar,
  for every $1\le i$, $|R_i| \le (i+1)(1-\e)\gamma$.
\end{alemma}

\begin{alemma}\label{alem:tri-delta}
  If $G$ is triangle-free or outerplanar,
  for every $5\le i$, $|\Delta_i| \le \frac{2|R_i|}{i-4}$.
\end{alemma}

\begin{alemma}\label{alem:tri-h2}
  If $G$ is triangle-free or outerplanar,
  for every $1\le i$, $|R_i| \le |R_{i+1}| - \frac{(i-3)|\Delta_{i+1}|}{2}$.
\end{alemma}

The proofs of \cref{alem:total-H} to \ref{alem:tri-h2} are copies of the ones
for \cref{lem:total-h,lem:h1,lem:delta,lem:h2}, with the execption that the edge
density of $3$ for planar graphs if now replaced by $2$ for triangle-free and
outerplanar.
Similarly to \cref{lem:size-D3} we formulate (and present in
\cref{sec:linear-prog}) a linear program to maximize $|D_3|$ under these
constraints, yielding the following lemma.

\begin{alemma}\label{alem:size-D3}\hspace{-1.7mm}\textbf{.}

\vspace{-6mm}
\textit{\begin{enumerate}
\item If $G$ is triangle-free, then  $|D_3|\le 10.5(1-\e)\gamma$.\smallskip
\item If $G$ has girth at least $5$, then $|D_3|\le 4(1-\e)\gamma$.\smallskip
\item If $G$ is outerplanar, then $|D_3|\le 8.6(1-\e)\gamma$.
\end{enumerate}}
\end{alemma}

\begin{theorem}\label{thm:tri}
There exists a distributed LOCAL algorithm that, for every triangle free planar
graph $G$, computes in a constant number of rounds a dominating set
of size at most $14\gamma(G)$.
\end{theorem}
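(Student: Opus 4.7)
The plan is to follow the same template used to establish \cref{thm:planar}, but with the parameters specialized to the triangle-free case, namely using \cref{alem:size-D12}(2) and \cref{alem:size-D3}(1) in place of \cref{lem:size-D12} and \cref{lem:size-D3}. The algorithm itself is exactly the one analyzed throughout \cref{sec:step1,sec:D2,sec:greedy}, with the only modification being that in the second phase the threshold for $B_v$ is set to $7$ instead of $10$. Correctness, i.e.\ the fact that $D_1\cup D_2\cup D_3$ is a dominating set of $G$, follows as before from \cref{lem:correctness}, whose proof is insensitive to the girth or density assumptions.

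Next I would assemble the size bound. By \cref{alem:size-D12}(2), for a triangle-free planar graph $G$ we have
\[
  |D_1\cup D_2| < 3\gamma + 4\e\gamma,
\]
and by \cref{alem:size-D3}(1) we have
\[
  |D_3| \le 10.5\,(1-\e)\gamma.
\]
Adding these two estimates yields
\[
  |D_1\cup D_2\cup D_3| < 3\gamma + 4\e\gamma + 10.5(1-\e)\gamma
  = 13.5\gamma - 6.5\,\e\gamma.
\]

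The coefficient of $\e$ is negative, hence the right-hand side is maximized at $\e=0$, giving a bound of $13.5\gamma$. Rounding up to an integer factor gives the claimed $14\gamma(G)$. The constant-round complexity is immediate since each of the three phases runs in $O(1)$ rounds in the LOCAL model (two rounds to compute $D_1$, two more rounds to compute $D_2$, and the greedy phase is a fixed number of rounds determined by the residual-degree upper bound of $18$ from \cref{alem:res-degree}(1)). There is no substantive obstacle here beyond checking the arithmetic and confirming that all the earlier adapted lemmas apply in the triangle-free setting, which they do since triangle-freeness is precisely the hypothesis under which \cref{alem:size-D12}(2) and \cref{alem:size-D3}(1) were stated.
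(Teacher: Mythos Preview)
Your proposal is correct and follows essentially the same approach as the paper's own proof: combine \cref{alem:size-D12}(2) with \cref{alem:size-D3}(1), observe that $|D_1\cup D_2\cup D_3|<13.5\gamma-6.5\e\gamma$ is maximized at $\e=0$, and round up to $14\gamma$. The additional remarks you include on correctness and round complexity are accurate but not present in the paper's (terser) proof.
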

\begin{proof}
By
\cref{alem:size-D12} we have $|D_1\cup D_2|<3\gamma+4\e\gamma$.  Then,
by \cref{alem:size-D3} we have $|D_3|\le 10.5(1-\e)\gamma$.
Therefore $|D_1 \cup D_2 \cup D_3| < 13.5\gamma -6.5\e\gamma$.
As $\e\in[0,1]$, this is maximized when $\e=0$. Hence
\mbox{$|D_1 \cup D_2 \cup D_3|< 13.5 \gamma$}.
\end{proof}

\begin{theorem}\label{thm:bip}
  There exists a distributed LOCAL algorithm that, for every bipartite planar graph
  $G$, computes in a constant number of rounds a
  dominating set of size at most $13\gamma(G)$.
\end{theorem}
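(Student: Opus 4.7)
The plan mirrors the proof of \cref{thm:tri} exactly; only the preprocessing bound changes. The key observation is that every bipartite graph is triangle-free (it contains no odd cycles), so every lemma proved in the triangle-free setting carries over unchanged to the bipartite setting. In particular, the residual-degree bound $\delta_R(v)\le 18$ from \cref{alem:res-degree}, together with the linear-programming estimate $|D_3|\le 10.5(1-\e)\gamma$ from case 1 of \cref{alem:size-D3}, are available without any modification.

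Next, I would invoke the bipartite case of \cref{alem:size-D12}, which gives the strictly better estimate $|D_1\cup D_2| < 2\gamma + 4\e\gamma$. This saving of one $\gamma$ compared to the triangle-free bound is the only place where bipartiteness beyond triangle-freeness is used, and it traces back to the sharper chromatic-number estimate $\chi(G)\le 2$ that enters through \cref{a-lenzen-improved-planar} and \cref{alem:D-hat}.

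Summing the two bounds yields
\[
|D_1\cup D_2\cup D_3| < 2\gamma + 4\e\gamma + 10.5(1-\e)\gamma = 12.5\gamma - 6.5\e\gamma.
\]
Since the coefficient of $\e$ is negative and $\e\in[0,1]$, the right-hand side is maximized at $\e=0$, giving $|D_1\cup D_2\cup D_3| < 12.5\gamma \le 13\gamma(G)$. Constant round complexity follows exactly as for \cref{thm:planar}: phase 1 needs $2$ rounds, phase 2 another $2$ rounds, and phase 3 runs for $19$ rounds since $\delta_R(v)\le 18$. There is essentially no technical obstacle here; the structural work has already been discharged in the adapted lemmas, and the theorem reduces to arithmetic. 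The only thing worth double-checking is that invoking the triangle-free $|D_3|$ bound under the bipartite hypothesis is legitimate, which is immediate from the inclusion of graph classes noted above.
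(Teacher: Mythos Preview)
Your proof is correct and follows the paper's argument essentially verbatim: invoke the bipartite case of \cref{alem:size-D12} for $|D_1\cup D_2|<2\gamma+4\e\gamma$, the triangle-free bound $|D_3|\le 10.5(1-\e)\gamma$ from \cref{alem:size-D3}, sum, and maximize at $\e=0$. Your added remarks (bipartite $\Rightarrow$ triangle-free, and the explicit round count) are accurate and harmless.
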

\begin{proof}
By
\cref{alem:size-D12} we have $|D_1\cup D_2|<2\gamma+4\e\gamma$.  Then,
by \cref{alem:size-D3} we have $|D_3|\le 10.5(1-\e)\gamma$.
Therefore $|D_1 \cup D_2 \cup D_3| < 12.5\gamma -6.5\e\gamma$.
As $\e\in[0,1]$, this is maximized when $\e=0$. Hence
\mbox{$|D_1 \cup D_2 \cup D_3|< 12.5 \gamma$}.
\end{proof}

\begin{theorem}\label{thm:girth}
  There exists a distributed LOCAL algorithm that, for every planar graph
  $G$ of girth at least~$5$, computes in a constant number of rounds a
  dominating set of size at most $7\gamma(G)$.
\end{theorem}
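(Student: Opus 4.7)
The plan is to follow exactly the template established in the proofs of \cref{thm:tri} and \cref{thm:bip}, combining case (3) of \cref{alem:size-D12} with case (2) of \cref{alem:size-D3}, and then observing that the resulting linear expression in $\e$ is maximized at the endpoint $\e=0$. First I would invoke case (3) of \cref{alem:size-D12} to obtain $|D_1\cup D_2| < 3\gamma+\e\gamma$; recall that in the girth-$5$ case the algorithm sets $D_2=\emptyset$, so the $\e\gamma$ term is simply $|D_1\cap D|$ arising from the definition of $\e$. Next I would apply case (2) of \cref{alem:size-D3}, which provides $|D_3|\le 4(1-\e)\gamma$.

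Summing the two bounds yields
\[
|D_1\cup D_2\cup D_3| < 3\gamma + \e\gamma + 4(1-\e)\gamma = 7\gamma - 3\e\gamma.
\]
Since the coefficient of $\e$ is negative and $\e\in[0,1]$, this expression is maximized at $\e=0$, giving the claimed bound $|D_1\cup D_2\cup D_3| < 7\gamma$. That the output is a dominating set and that it is produced in a constant number of rounds follows from \cref{lem:correctness} together with \cref{alem:res-degree}(2): the residual degree after the first two phases is at most $3$, so the greedy phase terminates in at most $4$ further rounds on top of the constant-round preprocessing.

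There is no genuine obstacle here: all the heavy lifting — the improved Lenzen-style lemma, the pseudo-cover analysis, the residual-degree bound, and the linear program solved in \cref{sec:linear-prog} — has already been encapsulated in the adapted lemmas of the previous section. The theorem reduces to plugging in the correct constants and verifying the sign of the $\e$-coefficient, exactly parallel to the arguments for the triangle-free and bipartite planar cases.
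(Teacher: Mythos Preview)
Your proposal is correct and mirrors the paper's own proof essentially line for line: invoke \cref{alem:size-D12}(3) for $|D_1\cup D_2|<3\gamma+\e\gamma$, then \cref{alem:size-D3}(2) for $|D_3|\le 4(1-\e)\gamma$, sum to get $7\gamma-3\e\gamma$, and maximize at $\e=0$. Your additional remarks on correctness and the constant number of rounds are fine extra context but not needed for the argument as stated in the paper.
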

\begin{proof}
By
\cref{alem:size-D12} we have $|D_1\cup D_2|<3\gamma+\e\gamma$.  Then,
by \cref{alem:size-D3} we have $|D_3|\le 4(1-\e)\gamma$.
Therefore $|D_1 \cup D_2 \cup D_3| < 7\gamma -3\e\gamma$.
As $\e\in[0,1]$, this is maximized when $\e=0$. Hence
\mbox{$|D_1 \cup D_2 \cup D_3|< 7 \gamma$}.
\end{proof}

\begin{theorem}\label{thm:outer}
  There exists a distributed LOCAL algorithm that, for every outerplanar graph
  $G$, computes in a constant number of rounds a
  dominating set of size at most $12\gamma(G)$.
\end{theorem}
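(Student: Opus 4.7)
The plan is to mirror the proofs of \cref{thm:tri,thm:bip,thm:girth} verbatim, simply plugging in the numbers that pertain to the outerplanar case. I would first invoke \cref{alem:size-D12} (the outerplanar/girth-5 clause) to get $|D_1\cup D_2| < 3\gamma + \e\gamma$, where $\e$ is the parameter that measures $|(D_1\cup D_2)\cap D|/\gamma$. Note that for outerplanar graphs $D_2=\emptyset$ by construction, which is precisely why the $\e$-coefficient is $1$ rather than $4$: the only contribution to $(D_1\cup D_2)\setminus D$ comes from $D_1\subseteq \hat D$, and $|\hat D\setminus D|<3\gamma$ by \cref{alem:D-hat} (point 2).

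Next I would apply \cref{alem:size-D3} (point 3) to obtain $|D_3|\le 8.6(1-\e)\gamma$, relying on the outerplanar case of the linear program solved in \cref{sec:linear-prog}. Adding the two bounds gives
\[
|D_1\cup D_2\cup D_3| < 3\gamma + \e\gamma + 8.6(1-\e)\gamma = 11.6\gamma - 7.6\e\gamma.
\]
Since the coefficient of $\e$ is negative and $\e\in[0,1]$, the right-hand side is maximized at $\e=0$, yielding $|D_1\cup D_2\cup D_3| < 11.6\gamma$, and hence $\le 12\gamma$.

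There is no real obstacle here: all the graph-theoretic and algorithmic content has already been established in the adapted lemmas, and in particular the outerplanar-specific edge density bound of $2$ feeds through \cref{alem:tri-delta,alem:tri-h2}, the degree bound $\delta_R(v)\le 9$ (via \cref{alem:res-degree}), and the $K_{2,3}$-freeness argument used to derive those bounds. The only thing left is the one-line arithmetic above, which shows that because $D_2$ is forced to be empty the slope in $\e$ stays negative, so as in the other restricted cases the worst case is $\e=0$ and the theorem follows.
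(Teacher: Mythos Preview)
Your proposal is correct and follows exactly the paper's own proof: invoke \cref{alem:size-D12}(3) for $|D_1\cup D_2|<3\gamma+\e\gamma$, then \cref{alem:size-D3}(3) for $|D_3|\le 8.6(1-\e)\gamma$, add to obtain $11.6\gamma-7.6\e\gamma$, and maximize at $\e=0$. The additional commentary you give about why $D_2=\emptyset$ forces the $\e$-coefficient to be $1$ is accurate and a nice clarification, but the core argument is identical to the paper's.
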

\begin{proof}
By
\cref{alem:size-D12} we have $|D_1\cup D_2|<3\gamma+\e\gamma$.  Then,
by \cref{alem:size-D3} we have $|D_3|\le 8.6(1-\e)\gamma$.
Therefore $|D_1 \cup D_2 \cup D_3| < 11.6\gamma -7.6\e\gamma$.
As $\e\in[0,1]$, this is maximized when $\e=0$. Hence
\mbox{$|D_1 \cup D_2 \cup D_3|< 11.6 \gamma$}.
\end{proof}

\section{Conclusion}

  We provided a new LOCAL algorithm that computes a
  20\hspace{1pt}-\hspace{1pt}approximation of
  a minimum dominating set in a planar graph in a constant number of
  rounds. Started with different parameters, the algorithm works also
  for several restricted cases of planar graphs. We showed that
  it computes a 14\hspace{1pt}-\hspace{1pt}approximation for
  triangle-free planar graphs, a 13\hspace{1pt}-\hspace{1pt}approximation
  for bipartite planar graphs, a 7\hspace{1pt}-\hspace{1pt}approximation
  for planar graphs of girth 5 and a 12\hspace{1pt}-\hspace{1pt}approximation
  for outerplanar graphs. In all cases except for the outerplanar case,
  where an optimal bound of 5 was already known, our algorithm
  improves on the previously best known approximation factors.
  This improvement is most significant in the case of general planar
  graphs, where the previously best known factor was 52.
  While we could tighten the gap between the best known lower bound of 7
  and upper bound of 52, there is still some room for improvement. We
  believe that the optimum approximation rate is much closer to $7$ than
  to $20$.

%
%
%
\pagebreak
 \bibliographystyle{splncs04}
 \bibliography{ref}
 
 \pagebreak
 \appendix 
 

\section{The linear program}\label{sec:linear-prog}

In this final section we present our formulation of the constraints as a
linear program as well as the resulting bounds on how many vertices of
the specific residual degrees can be found in the worst case.
We formulate the constraints of \cref{lem:total-h,lem:h1,lem:delta,lem:h2} in a
straight forward way and remove the
$(1-\e)\gamma$ factor, which is then added to the result.
This reasoning is correct thanks to the fact that all constraints are linear
equations; we formally prove it below.

\smallskip
Define $r_i := \frac{|R_i|}{(1-\e)\gamma}$ and  $d_i:=\frac{|\Delta_i|}{(1-\e)\gamma}$.
Then the constraints of \cref{lem:total-h,lem:h1,lem:delta,lem:h2} imply respectively:
\begin{itemize}
  \item For every $0\le i\le 30$:\quad $r_i \ge \sum\limits_{j\le i} d_j$.\\ \smallskip
  \item For every $0\le i\le 30$:\quad $r_i \le i+1$.\\ \smallskip
  \item For every $7\le i\le 30$:\quad $d_i \le \frac{3 r_i}{i-6}$.\\ \smallskip
  \item For every $0\le i\le 29$:\quad $r_i \le r_{i+1} - \frac{(i-5)d_{i+1}}{3}$.
\end{itemize}

We then run the linear program with these variables; finally we provide the bound for $D_3$ using:
\[|D_3| ~=~ \sum\limits_{i\le 30} |\Delta_i|
~=~ \sum\limits_{i\le 30} d_i(1-\e)\gamma
~=~ (1-\e)\gamma\sum\limits_{i\le 30}d_i.\]

Before showing the code and the results, we briefly explain what we expect as a
result for these linear programs.

\subsection{Interpretation of the results}
In all four cases, our sets of equations yield similar looking results.
The step $3$ can roughly be decomposed into two.

First, for several values of
$i$, we have very small $d[i]$. We exactly have $d[i]$ such that
given $r[i]=i+1$ we get $r[i-1]=i$. Intuitively, picking less element in $d[i]$
is not the worst case as $r[i-1]$ cannot be bigger than $i$ by \cref{lem:h1}.
So it is ``free'' to take at least that many vertices.
It is also not the worst case if more elements are picked, because then $r[i]$
would shrink drastically, making the forthcoming $d[j]_{j<i}$ much smaller.

Second, there is a turning point. It occurs a little bit above the average
degree of planar graphs; so the number of vertices of degree 9 for example is
not so small. This is when \cref{lem:total-h} become predominant:
``Overall, we do not take more dominators that there are vertices to dominate.''
So in one round every vertex gets picked and the algorithm stops.
This turning point is $i=9$ for planar graphs.

We did not manage to formally prove this statement, but it was confirmed for
these cases by the linear programs.

\newpage
\subsection{The linear program for planar graphs}

\begin{verbatim}
//the ranges i can have
range I  = 1..30;
range I2 = 1..29;
range I3 = 7..30;

//decision variables as arrays
dvar float+ d[I];
dvar float+ r[I];

//maximize the sum of A_i
maximize sum(i in I) (d[i]);

// our equations
subject to
{
  // By lemma 13
  forall(i in I) r[i] >= sum(x in 1..i)d[x];

  // By lemma 14
  forall(i in I) r[i] <= i+1;

  // By lemma 15
  forall(i in I3) d[i] <= (3 * r[i])  /  ( i-6 );

  // By Lemma 16
  forall(i in I2) r[i] <=  r[i+1]  -  ((( i-5 ) * d[i+1]) / 3);
}
\end{verbatim}


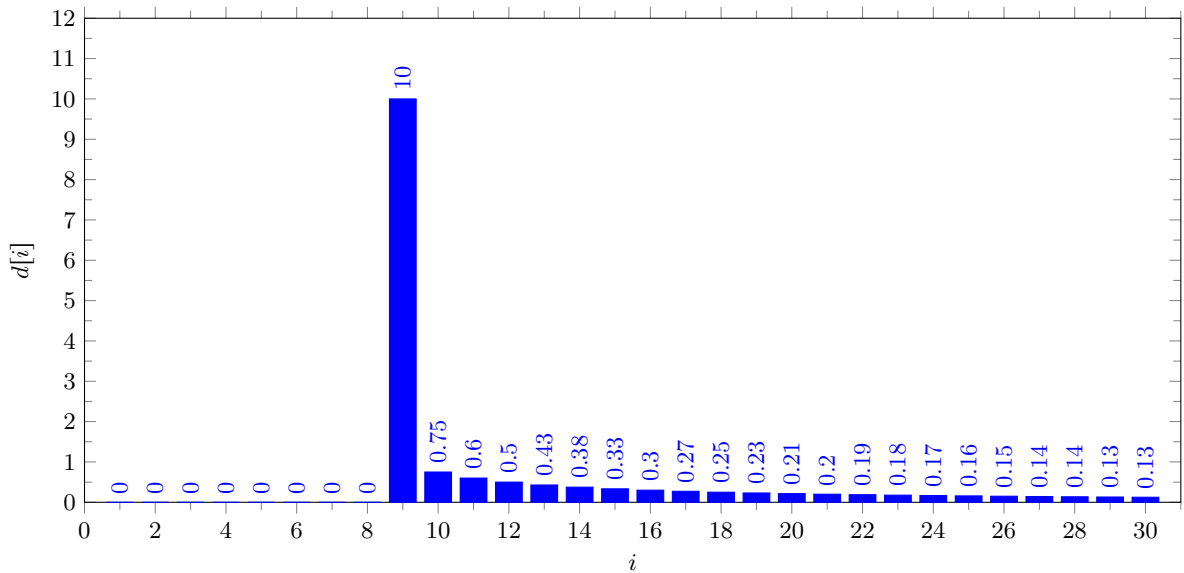
\begin{figure}
  \begin{tikzpicture}
    \begin{axis}[
      ybar,
      xmin = 0, xmax = 31,
      ymin = 0, ymax = 12,
      xtick distance = 2,
      ytick distance = 1,
      minor tick num = 1,
      width = \textwidth,
      height = \textwidth*0.5,
      xlabel = {$i$},
      ylabel = {$d[i]$},]
    ]

    \addplot +[
      ybar,
      fill=blue,
      nodes near coords,
      nodes near coords style = {anchor=west, rotate=90}
    ] file[skip first] {results_normal.txt};

    \end{axis}
  \end{tikzpicture}
  \caption{The degree distribution in general planar graphs}
\end{figure}

\pagebreak
\subsection{The linear program for triangle-free planar graphs}

\begin{verbatim}
//Tri-Free
  //the ranges i can have
  range I  = 1..18;
  range I2 = 1..17;
  range I3 = 5..18;

  //decision variables as arrays
  dvar float+ d[I];
  dvar float+ r[I];

  //maximize the sum of A_i
  maximize sum(i in I) (d[i]);

  // our equations
  subject to
  {
    // By Adapted lemma 13
    forall(i in I) r[i] >= sum(x in 1..i)d[x];

    // By Adapted lemma 14
    forall(i in I) r[i] <= i+1;

    // By Adapted lemma 15
    forall(i in I3) d[i] <= (2 * r[i])  /  ( i-4 );

    // By Adapted lemma 16
    forall(i in I2) r[i] <=  r[i+1]  -  ((( i-3 ) * d[i+1]) / 2);
  }
\end{verbatim}

\begin{figure}
  \begin{tikzpicture}
    \begin{axis}[
      ybar,
      xmin = 0, xmax = 20,
      ymin = 0, ymax = 9,
      xtick distance = 5,
      ytick distance = 1,
      minor tick num = 1,
      width = \textwidth,
      height = \textwidth*0.5,
      xlabel = {$i$},
      ylabel = {$d[i]$},]
    ]

    \addplot +[
      ybar,
      fill=blue,
      nodes near coords,
      nodes near coords style = {anchor=west, rotate=90}
    ] file[skip first] {results_tri-free.txt};

    \end{axis}
      \end{tikzpicture}
  \caption{The degree distribution in triangle-free planar graphs}
\end{figure}
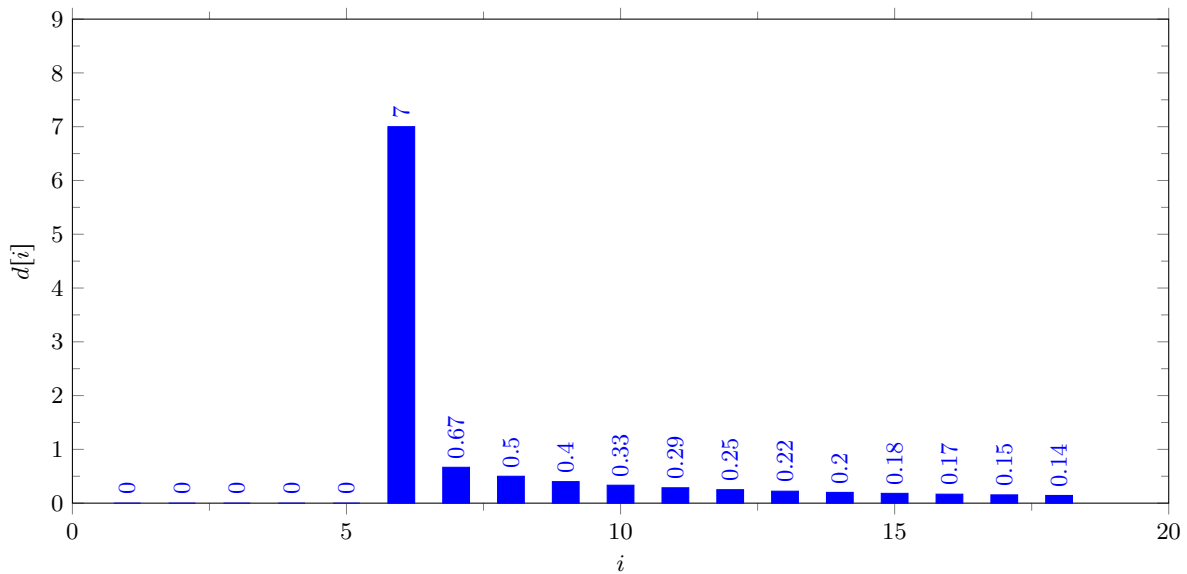

\pagebreak
\subsection{The linear program for outerplanar graphs}

\begin{verbatim}
//outerplanar

  //the ranges i can have
  range I  = 1..9;
  range I2 = 1..8;
  range I3 = 5..9;

  //decision variables as arrays
  dvar float+ d[I];
  dvar float+ r[I];

  //maximize the sum of A_i
  maximize sum(i in I) (d[i]);

  // our equations
  subject to
  {
    // By Adapted lemma 13
    forall(i in I) r[i] <= i+1;

    // By Adapted lemma 14
    forall(i in I2) r[i] <=  r[i+1]  -  ((( i-3 ) * d[i+1]) / 2);

    // By Adapted lemma 15
    forall(i in I3) d[i] <= (2 * r[i])  /  ( i-4 );

    // By Adapted lemma 16
    forall(i in I) r[i] >= sum(x in 1..i)d[x];
  }
\end{verbatim}


\begin{figure}
  \begin{tikzpicture}

    \begin{axis}[
      ybar,
      xmin = 0, xmax = 15,
      ymin = 0, ymax = 9,
      xtick distance = 1,
      ytick distance = 1,
      minor tick num = 1,
      width = \textwidth,
      height = \textwidth*0.5,
      xlabel = {$i$},
      ylabel = {$d[i]$},]
    ]

    \addplot +[
      ybar,
      fill=blue,
      nodes near coords,
      nodes near coords style = {anchor=west, rotate=90}
    ] file[skip first] {results_outerplanar.txt};

    \end{axis}
  \end{tikzpicture}
  \caption{The degree distribution in outerplanar graphs}
\end{figure}
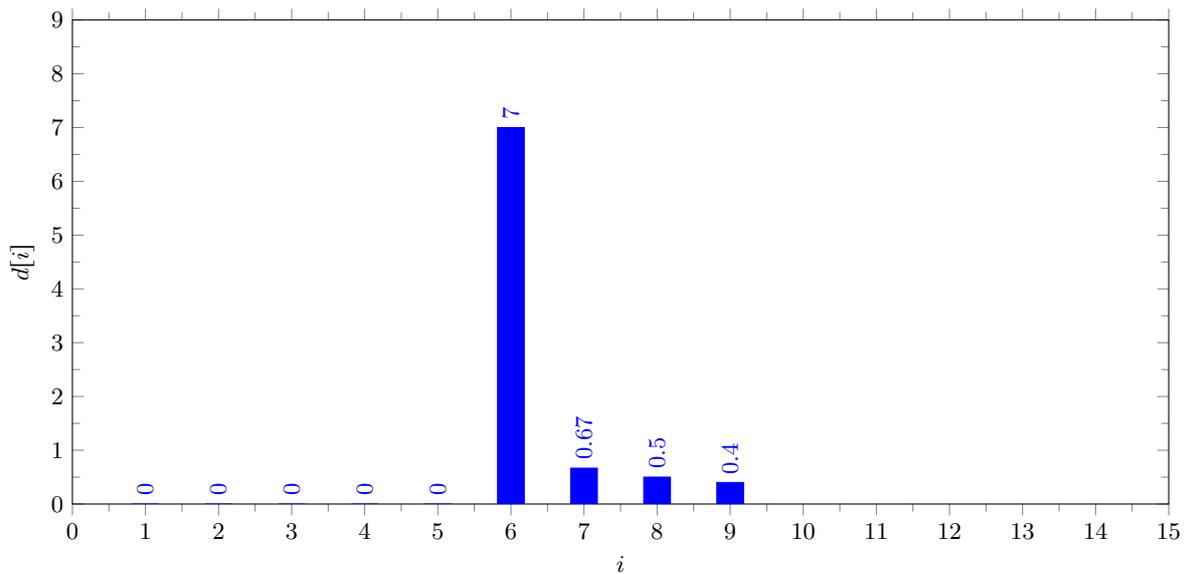

\pagebreak
\subsection{The linear program for planar graphs of girth 5}
In this case, the \cref{alem:total-H} to \ref{alem:tri-h2} can be slightly
improved, as the edge density of planar graphs of girth 5 is at most $5/3$.
This is however not so useful. As shown below, the linear constraints do not
yield something better than simply picking all $4\gamma$ non dominated vertices.

\begin{verbatim}
//girth5
  //the ranges i can have
  range I = 1..3;
  range I2 = 1..2;
  range I3 = 4..3;

  //decision variables as arrays
  dvar float+ d[I];
  dvar float+ r[I];

  //maximize the sum of A_i
  maximize sum(i in I) (d[i]);

  // our equations
  subject to
  {
    forall(i in I) r[i] >= sum(x in 1..i)d[x];

    forall(i in I) r[i] <= i+1;

    forall(i in I3) d[i] <= ((5 * r[i])  /  ( 3 * i -10 ));

    forall(i in I2) r[i] <=  r[i+1]  -  ((( 3*i-7 ) /5 )* d[i+1]);
  }
\end{verbatim}


\begin{figure}
  \begin{tikzpicture}

    \begin{axis}[
      ybar,
      xmin = 0, xmax = 15,
      ymin = 0, ymax = 9,
      xtick distance = 1,
      ytick distance = 1,
      minor tick num = 1,
      width = \textwidth,
      height = \textwidth*0.5,
      xlabel = {$i$},
      ylabel = {$d[i]$},]
    ]

      \addplot +[
      ybar,
      fill=blue,
      nodes near coords,
      nodes near coords style = {anchor=west, rotate=90}
    ] file[skip first] {results_girth.txt};

    \end{axis}
  \end{tikzpicture}
  \caption{The degree distribution in planar graphs of girth 5.}
\end{figure}
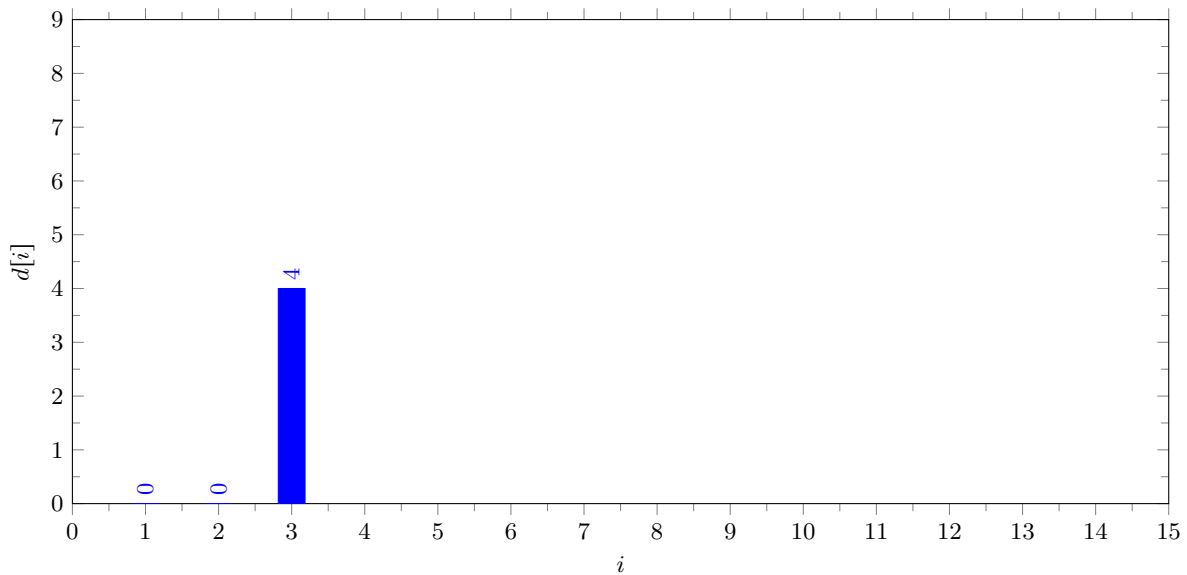

\end{document}